\documentclass[11pt]{article}
\usepackage[utf8]{inputenc}
\usepackage[russian,english]{babel}
\newcommand{\rus}[1]{\foreignlanguage{russian}{#1}}
\usepackage{geometry}             
\geometry{a4paper} 
\usepackage{amssymb,amsmath,amsthm,hyperref,xcolor}

\usepackage{url}

\newtheorem{theorem}{Theorem}
\newtheorem{proposition}{Proposition}
\newtheorem*{corollary}{Corollary}
\newtheorem{lemma}{Lemma}[theorem]

\theoremstyle{definition}
\newtheorem{definition}{Definition}
\theoremstyle{remark}
\newtheorem*{remark}{Remark}
\newtheorem*{question}{Question}
\let\eps=\varepsilon
\let\le=\leqslant
\let\ge=\geqslant
\newcommand{\cnd}{\mskip 1mu | \mskip 1mu}
\newcommand{\dimFS}{\mathrm{dim}_{\mathrm{FS}}}
\DeclareMathOperator{\KS}{\mathrm{C}}
\DeclareMathOperator{\CSR}{\mathrm{CSR}}
\DeclareMathOperator{\KP}{\mathrm{K}}
\DeclareMathOperator{\KA}{\mathrm{KA}}
\DeclareMathOperator{\poly}{\mathrm{poly}}

\newcommand{\nb}[1]{{\color{blue}}}

\title{Automatic Kolmogorov complexity, normality,\\ and finite state dimension revisited}
\date{\today}

\author{Alexander Kozachinskiy\thanks{HSE Computer Science Department, Moscow}\and Alexander Shen\thanks{LIRMM, CNRS and University of Montpellier, France. On leave from IITP RAS, Moscow. Supported by ANR-15-CE40-0016 RaCAF grant. Part of the work was done while visiting National Research University High School of Economics, Moscow. E-mail: \texttt{sasha.shen@gmail.com} or \texttt{alexander.shen@lirmm.fr}}}

\begin{document}
\maketitle

\begin{abstract}
It is well known that normality (all factors of a given length appear in an infinite sequence with the same frequency) can be described as incompressibility via finite automata. Still the statement and the proof of this result as given by Becher and Heiber~\cite{becher-heiber} in terms of ``lossless finite-state compressors'' do not follow the standard scheme of Kolmogorov complexity definition (an automaton is used for compression, not decompression). We modify this approach to make it more similar to the traditional Kolmogorov complexity theory (and simpler) by explicitly defining the notion of automatic Kolmogorov complexity and using its simple properties. Other known notions (Shallit--Wang~\cite{shallit-wang}, Calude -- Salomaa -- Roblot~\cite{csr}) of description complexity related to finite automata are discussed (see the last section).

Using this characterization, we provide easy proofs for most of the classical results about normal sequences, including the equivalence between aligned and non-aligned definitions of normality, the Piatetski-Shapiro  sufficient condition for normality (in a strong form), and Wall's theorem saying that a normal real number remains normal when multiplied by a rational number or when a rational number is added. Using the same characterization, we prove a sufficient condition for normality of a sequence in terms of Kolmogorov complexity. This condition implies the normality of Champernowne's sequence as well as some generalizations of this result (provided by Champernowne himself, Besicovitch, Copeland and Erd\"os). It can be also used to give a simple proof of the result of Calude -- Staiger -- Stephan~\cite{css} saying that normality cannot be characterized in terms of the automatic complexity notion introduced by Calude -- Salomaa -- Roblot~\cite{csr}.

Then we extend this approach to finite state dimension showing that automatic Kolmogorov complexity can be used to characterize the finite state dimension (defined by Dai, Lathrop, Lutz and Mayordomo in~\cite{dai-fsd}). We start with the block entropy definition of the finite state dimension given by Bourke, Hitchcock and Vinogradchandran~\cite{bourke} and show that one may use non-aligned blocks in this definition. Then we show that this definition is equivalent to the definition in terms of automatic complexity. Finally, we use a slightly different version of automatic complexity (a finite state version of a priori  complexity) to show the equivalence between the block entropy definition and original definition from~\cite{dai-fsd} (this equivalence was proven in~\cite{bourke}). We also give a ``machine-independent'' characterization of finite state dimension in terms of superadditive functions that are ``calibrated'' in some sense (have not too many small values), or superadditive upper bounds for Kolmogorov complexity.

Finally, we use our tools to give a simple proof of Agafonov's result saying that normality is preserved by automatic selection rules~\cite{agafonov} as well as the results of Schnorr and Stimm~\cite{schnorr-stimm} that relate normality to finite state martingales. 

Some results of this paper were presented at the Fundamentals in Computing Theory conferences in 2017 and 2019~\cite{fct2017,fct2019}. Preliminary version of this paper (that does not mention the finite state dimension) was published in \texttt{arxiv.org} in~2017~\cite{arxiv2017}.

\end{abstract}

\clearpage
\tableofcontents
\clearpage

\section{Introduction}

What is an individual random object? When could we believe, looking at an infinite sequence $\alpha$ of zeros and ones,  that $\alpha$ was obtained by tossing a fair coin? The minimal requirement is that zeros and ones appear ``equally often'' in $\alpha$: both have limit frequency $1/2$. Moreover, it is natural to require that all $2^k$ bit blocks of length $k$ appear equally often. Sequences that have this property are called \emph{normal} (see the exact definition in Section~\ref{subsec:normal}; a historic account can be found in~\cite{becher-heiber,bugeaud}).

Intuitively, a reasonable definition of an individual random sequence should require much more than just normality; the corresponding notions are studied in the algorithmic randomness theory (see~\cite{downey-hirschfeldt,nies} for the detailed exposition, \cite{kolmbook} for a textbook and \cite{shen-intro} for a short survey). A widely accepted definition of randomness is the given by Martin-L\"of; the corresponding notion is called now \emph{Martin-L\"of randomness}. The classical Schnorr -- Levin theorem says that this notion is equivalent to \emph{incompressibility}: a sequence $\alpha$ is Martin-L\"of random if an only if prefixes of $\alpha$ are incompressible (do not have short descriptions). See again~\cite{downey-hirschfeldt,nies,kolmbook,shen-intro} for exact definitions and proofs.

It is natural to expect that normality, being a weak randomness property, corresponds to some weak incompressibility property. The connection between normality and finite-state computations was noticed long ago, as the title of~\cite{agafonov} shows. This connection led to a characterization of normality as ``finite-state incompressibility'' (see~\cite{becher-heiber} and then~\cite{bch2,carton-talk}). However, the notion of incompressibility that was used in~\cite{becher-heiber} does not fit well the general framework of Kolmogorov complexity (finite automata are considered there as \emph{compressors}, while in the usual definition of Kolmogorov complexity we restrict the class of allowed \emph{decompressors}).

In this paper we give a definition of automatic Kolmogorov complexity that restricts the class of allowed decompressors and is suitable for the characterization of normal sequences as incompressible ones. This definition and its properties are considered in Section~\ref{sec:complexity}. Section~\ref{sec:normal-incompressible} presents one of our main results: characterization of normality in terms of automatic complexity.  First (Section~\ref{subsec:normal}) we recall the notion of a normal sequence. Then (Section~\ref{subsec:charac}) we provide a characterization of normal sequences as sequences whose prefixes have automatic Kolmogorov complexity close to the length.

This characterization is used in Section~\ref{sec:applications} to provide simple proofs for many classical results about normal sequences. In Section~\ref{subsec:non-aligned} we give a simple proof of an old result (Borel, Pillai, Niven -- Zuckerman,~\cite{borel1909,pillai1940,niven-zuckerman1951}) saying that normality can be equivalently defined in terms of frequencies of aligned or non-aligned blocks (we get the same notion in both cases). In Section~\ref{subsec:piatetski-shapiro} we provide a simple proof of the result by Piatetski-Shapiro~\cite{piatetski-shapiro,piatetski-shapiro1957} saying that a sequence is normal if for every $k$-bit block its frequency is not much bigger than its expected frequency in a random sequence. This proof can be used to prove a stronger version of this result, replacing a constant factor in Piatetski-Shapiro version by factor $2^{o(k)}$. We note also that Piatetski-Shapiro's result easily implies Wall's theorem (saying that normal numbers remain normal when multiplied by a rational factor).
 
Then in Section~\ref{subsec:champernowne} we return to the first example of a normal sequence given by Champernowne~\cite{champernowne} and show that its normality easily follows from a simple sufficient condition for normality in terms of Kolmogorov complexity (Theorem~\ref{th:normality-condition}) . The same sufficient condition easily implies the generalizations of Champernowne's results obtained by Copeland--Erd\"os (\cite{copeland-erdos}, see Section~\ref{subsec:copeland-erdos}) and Besicovitch (\cite{besicovitch}, see Section~\ref{subsec:besicovitch}).  Finally, in Section~\ref{subsec:csr} we show how this sufficient condition gives a simple proof of a result proven by Calude, Staiger and Stephan~\cite{css}. It says that the definition of automatic complexity from~\cite{csr} does not provide a criterion of normality (this question was asked in~\cite{csr}).

The notion of normality can be interpreted as ``weak randomness'' (weak incompressibility). Instead of randomness, one can consider a more general notion of \emph{effective Hausdorff dimension} introduced by Lutz in~\cite{lutz-dimension} (see~\cite[Sections 5.8 and 9.10]{kolmbook} for details). The effective Hausdorff dimension is defined for arbitrary binary sequences and is between $0$ and $1$; the smaller it is, the more compressible is the sequence. Formally speaking, the effective Hausdorff dimension of a sequence $\alpha=a_0a_1\ldots$ can be defined in terms of Kolmogorov complexity as $\liminf_n [(\text{complexity of $a_0\ldots a_{n-1}$})/n$]. For random sequences the effective Hausdorff dimension equals $1$ (as well as for some non-random sequences, e.g., for a sequence that is obtained from a random one by replacing all terms $a_{2^n}$ by zeros). This notion is an effective counterpart of the classical notion of Hausdorff dimension, see~\cite{lutz-dimension,kolmbook}.

The notion of effective Hausdorff dimension has a scaled-down version  based on  finite automata. This notion is called \emph{finite-state dimension} and was introduced in~\cite{dai-fsd}. In this paper it was defined in terms of finite-state martingales; in~\cite{bourke} an equivalent definition in terms of entropy rates was provided.  In Section~\ref{subsec:entropy-rates} we revisit the definition of finite-state dimension in terms of entropy rates and show that one may use both aligned and non-aligned blocks in this definition and get the same quantity. However, this equivalence does not work for blocks of fixed size, as the counterexamples of Section~\ref{subsec:large-blocks} (Theorem~\ref{thm:large-blocks}) show. In Section~\ref{subsec:fsd-wall} we give a simplified proof of a theorem of Doty, Lutz and Nandakumar~\cite{doty} saying that finite-state dimension does not change when a real number is multiplied by a rational factor. Then in Section~\ref{subsec:fsd-ac} we show that finite-state dimension can be characterized in terms of automatic complexity as the $\liminf$ of complexity/length ratio for prefixes, thus giving a characterization of finite-state dimension that is parallel to the characterization of effective Hausdorff dimension in terms of Kolmogorov complexity. Again, this connection between finite-state dimension and automatic compression was noted long ago in~\cite{dai-fsd}. Our goal here is to give a statement that is parallel to the Kolmogorov complexity characterization of effective Hausdorff dimension. The only difference is that we use automatic Kolmogorov complexity instead of the standard one (and have to take infimum over all automata since there is no universal automaton that leads to minimal complexity). To prove this characterization, we use the definition of finite-state dimension in terms of entropy rates.   In Section~\ref{subsec:stateless}  we show that this characterization is quite robust: automatic complexity can be replaced by other similar notions. For example, we may consider all superadditive upper bounds for Kolmogorov complexity (Theorem~\ref{th:stateless}), or even give a characterization of finite-state dimension (Theorem~\ref{th:stateless1}) that does not mention entropy, Kolmogorov complexity and finite-state machines at all, and just considers a class of superadditive functions that are ``calibrated'' in some natural sense (have not too many small values).

In Section~\ref{subsec:fsm} we give a simple proof that the definition of finite-state dimension in terms of entropy rates is equivalent to the original definition from~\cite{dai-fsd}. For that we discuss a finite-state version of the notion of a priori probability (maximal semimeasure) used in the algorithmic information theory, and show that it also can be used to characterize the finite-state dimension. In Section~\ref{subsec:agafonov} we use our tools to give a simple proof for the result of Agafonov~\cite{agafonov} (finite automaton selects a normal sequence if applied to a normal sequence) and its extension by Schnorr and Stimm~\cite{schnorr-stimm} saying the any finite-state martingale is either constant or exponentially decreases on sufficiently long prefixes of a normal sequence. We also mention a natural notion of finite-state measure that generalizes the notion of multi-account gales~\cite{dai-fsd} and also can be used to characterize finite-state dimension (Section~\ref{subsec:multi-account}). 

The notions of Hausdorff dimension, effective Hausdorff dimension and finite-state dimension have \emph{strong} counterparts~\cite{athreya-strong}, known as packing dimension, effective strong dimension and finite-state strong dimension. They can be defined in terms of martingales by requiring that winning martingale is large not only infinitely often (as for the Hausdorff dimension) but for all sufficiently long prefixes. In terms of complexities, we consider $\limsup$ of complexity/length ratios instead of $\liminf$. We note (Section~\ref{subsec:strong}) that all the results relating finite-state dimension, automatic complexity and automatic a priori probability remain valid (with almost the same proofs) for the strong dimensions. This include one of the oldest results of this type relating automatic compression rate with the limit entropy of (non-aligned) blocks that goes back to Lempel and Ziv~\cite{ziv-lempel-variable-rate}, see also their earlier papers~\cite{lempel-ziv-complexity,ziv-lempel-universal,ziv}.

Finally, in Section~\ref{sec:discussion} we compare our definition of automatic complexity with other similar notions.

\section{Automatic Kolmogorov complexity}\label{sec:complexity}

\subsection{General scheme of defining complexities}\label{subsec:general-scheme}

The algorithmic (Kolmogorov) complexity is usually defined in the following way: $\KS(x)$, the complexity of an object $x$, is the minimal length of its ``description''. We assume that both objects and descriptions are binary strings; the set of binary strings is denoted by $\mathbb{B}^*$, where $\mathbb{B}=\{0,1\}$. Of course, this definition makes sense only after we explain which type of ``descriptions'' we consider, but most versions of Kolmogorov complexity can be defined according to this scheme~\cite{uspshen}.

\begin{definition}
Let $D\subset \mathbb{B}^*\times\mathbb{B}^*$ be a binary relation; we read $(p,x)\in D$ as ``$p$ is a $D$-description of $x$''. Then \emph{complexity function} $\KS_D$ is defined as
$$
\KS_D(x)=\min\{ |p|\colon (p,x)\in D\},
$$
i.e., as the minimal length of a $D$-description of $x$.
\end{definition}
Here $|p|$ stands for the length of a binary string $p$ and $\min(\varnothing)=+\infty$, as usual. We say that $D$ is a \emph{description mode} and $\KS_D(x)$ is the \emph{complexity of $x$ with respect to the description mode~$D$}.

We get the original version of Kolmogorov complexity (``plain complexity'') if we consider all computable partial functions as description modes, i.e., if we consider relations $D_f=\{(p,f(p))\}$ for arbitrary computable partial functions $f$ as description modes. Equivalently, we may say that we consider (computably) enumerable relations $D$ that are graphs of functions (for every $p$ there exists at most one $x$ such that $(p,x)\in D$; each description describes at most one object). Then the Kolmogorov -- Solomonoff optimality theorem says that there exists an optimal $D$ in this class that makes $\KS_D$ minimal up to an $O(1)$ additive term. We assume that the reader is familiar with basic properties of Kolmogorov complexity, see, e.g., ~\cite{lv,kolmbook}; for a short introduction see also~\cite{shen-intro}.

Note that we could get a trivial $\KS_D$ if we take, e.g., the set of all pairs as a description mode $D$. In this case all strings have complexity zero, since the empty string describes all of them. So we should be careful and do not consider description modes where the same string describes too many different objects.

\subsection{Automatic description modes}\label{subsec:automatic-modes}

To define our class of description modes, let us first recall some basic notions related to finite automata. Let $A$ and $B$ be two finite alphabets. Consider a directed graph $G$ whose edges are labeled by pairs $(a,b)$ of letters (from $A$ and $B$ respectively). We also allow pairs of the form $(a,\eps)$, $(\eps,b)$, and $(\eps,\eps)$ where $\eps$ is a special symbol (not in $A$ or $B$) that informally means ``no letter''. For such a graph, consider all directed paths in it (no restriction on starting or final points), and for each path $p$ concatenate all the first components and also (separately) all the second components of the pairs along the path; $\eps$ is replaced by an empty word. For each path we get some pair $(u_p,v_p)$ where $u_p\in A^*$ and $v_p\in B^*$ (i.e., $u_p$ and $v_p$ are words over alphabets $A$ and $B$). Consider all pairs that can be read in this way along all paths in $G$, i.e., consider the set $R_G=\{(u_p,v_p)\mid \text{$p$ is a path in $G$}\}$. For each labeled graph $G$ we obtain a relation $R_G$ that is a subset of $A^*\times B^*$. For the purposes of this paper, we call the relations obtained in this way ``automatic''.  This notion is similar to \emph{rational relations} defined by transducers~\cite[Section III.6]{berstel}. The difference is that we do not fix initial/finite states (so every sub-path of a valid path is also valid) and that we do not allow arbitrary words as labels, only letters and $\eps$. (This will be important, e.g., for the statement (j) of Theorem~\ref{th:complexity}.)

\begin{definition}\label{def:automatic}
A relation $R\subset A^*\times B^*$ is \emph{automatic} if there exists a labeled graph \textup(\emph{automaton}\textup) $G$ such that $R=R_G$.
\end{definition}

Now we define automatic description modes as automatic relations where each string describes at most $O(1)$ objects.

\begin{definition}
A relation $D\subset \mathbb{B}^*\times\mathbb{B}^*$ is an \emph{automatic description mode} if 
\begin{itemize}
\item $D$ is automatic in the sense of Definition~\textup{\ref{def:automatic};}
\item $D$ is a graph of an $O(1)$-valued function: there exists some constant $c$ such that for each $p$ there are at most $c$ values of $x$ such that $(p,x)\in D$.
\end{itemize}
\end{definition}

For every automatic description mode $D$ we consider the corresponding complexity function~$\KS_D$. There is no optimal mode $D$ that makes $\KS_D$ minimal (see Theorem~\ref{th:complexity} below). So, stating some properties of complexity, we need to mention $D$ explicitly. Moreover, for a statement that compares the complexities of different strings, we need to say something like ``for every automatic description mode $D$ there exists another automatic description mode $D'$ such that\ldots'', and then make a statement that involves both $\KS_D$ and $\KS_{D'}$. (A similar approach is needed when we try to adapt inequalities for Kolmogorov complexity to the case of resource-bounded complexities.)

\subsection{Properties of automatic description modes}\label{subsec:properties-modes}

Let us first mention some basic properties of automatic description modes.

\begin{proposition}\label{prop:modes}
\leavevmode
\begin{description}
\item[\textup{(a)}] The union of two automatic description modes is an automatic description mode.
\item[\textup{(b)}] The composition of two automatic description modes is an automatic description mode.
\item[\textup{(c)}] If $D$ is a description mode, then $\{(p,x0)\colon (p,x)\in D\}$ is a description mode \textup(here $x0$ is the binary string $x$ with $0$ appended\textup); the same is true for $x1$ instead of $x0$.
\end{description}
\end{proposition}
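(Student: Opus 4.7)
My plan is to handle each of the three parts by constructing an explicit automaton and verifying that $O(1)$-valuedness is preserved. For (a), given automata $G_i$ realizing $D_i$, I would take the disjoint union $G_1\sqcup G_2$; since no vertices or edges are shared between the two parts, every directed path lies entirely in one component, so $R_{G_1\sqcup G_2}=D_1\cup D_2$. If $D_i$ is $c_i$-valued, the union is $(c_1+c_2)$-valued, so it is still a description mode.

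For (b), assuming $D_1\subset A^*\times B^*$ and $D_2\subset B^*\times C^*$ are realized by $G_1$ and $G_2$, I would build a product automaton on the vertex set $V_1\times V_2$ with three kinds of transitions: a synchronized transition $(u_1,u_2)\to(v_1,v_2)$ labeled $(a,c)$ whenever $G_1$ has an edge $u_1\to v_1$ labeled $(a,b)$ and $G_2$ has an edge $u_2\to v_2$ labeled $(b,c)$ for the same letter $b\in B$; a left $\eps$-transition $(u_1,u_2)\to(v_1,u_2)$ labeled $(a,\eps)$ for every $u_2$, whenever $G_1$ has an edge $u_1\to v_1$ labeled $(a,\eps)$; and, symmetrically, a right $\eps$-transition for $G_2$. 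A path in the product then unwinds into a matched pair of paths in $G_1,G_2$ agreeing on the intermediate word, so the relation captured is the composition $D_1\circ D_2$, and the bound on the number of values grows multiplicatively to $c_1c_2$.

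For (c), given $G$ realizing $D$, I would adjoin one new vertex $f$ with no outgoing edges, together with an edge $v\to f$ labeled $(\eps,0)$ from every vertex $v$ of $G$. Every path in the enlarged graph either lies entirely in $G$ (producing a pair of $D$) or is a $G$-path ending at some $v$ followed by the single new edge (producing $(p,x0)$ for some $(p,x)\in D$), so the captured relation is $D\cup\{(p,x0)\colon(p,x)\in D\}$, which contains the required pairs and is at most $(2c)$-valued. Replacing the label $(\eps,0)$ by $(\eps,1)$ handles the $x1$ version.

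The main subtle step is (b): the asynchronous $\eps$-moves must be installed so that every product-path decomposes into a genuine matched pair of paths in $G_1$ and $G_2$ agreeing letter by letter on the intermediate word, rather than into spurious pairs arising from uncoordinated moves on the two coordinates. Once this synchronization is set up correctly, the construction is a direct adaptation of the classical transducer-composition argument to the letter-or-$\eps$ label convention used here.
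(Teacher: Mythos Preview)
Your proposal is correct and matches the paper's proof essentially step for step: disjoint union for (a), the same three-rule product automaton for (b), and the same ``new sink vertex with $(\eps,0)$-edges from every old vertex'' construction for (c). You are in fact slightly more careful than the paper in noting that the construction in (c) realizes $D\cup\{(p,x0):(p,x)\in D\}$ rather than exactly $\{(p,x0):(p,x)\in D\}$, and in giving the explicit bounds $c_1+c_2$, $c_1c_2$, $2c$ on the valuedness; the only place to tighten is in (b), where you should also spell out the converse inclusion (that every composed pair arises from a product path via alignment), as the paper does.
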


\begin{proof}
There are two requirements for an automatic description mode:  (1)~the relation is automatic and (2)~the number of images is bounded. The second one is obvious in all three cases. The first one can be proven by a standard argument (see, e.g., \cite[Theorem 4.4]{berstel}) that we reproduce for completeness.

(a)~The union of two relations $R_G$ and $R_G'$ for two automata $G$ and $G'$ corresponds to an automaton that is a disjoint union of $G$ and $G'$.

(b)~Let $S$ and $T$ be automatic relations that correspond to automata $K$ and~$L$. Consider a new graph that has set of vertices $K\times L$. (Here we denote an automaton and the set of vertices of its underlying graph by the same letter.)
\begin{itemize}
\item If an edge $k\to k'$ with a label $(a,\eps)$ exists in $K$, then the new graph has edges $(k,l)\to(k',l)$ for all $l\in L$; all these edges have the same label $(a,\eps)$. 
\item In the same way an edge $l\to l'$ with a label $(\eps,c)$ in $L$ causes edges $(k,l)\to (k,l')$ in the new graph for all $k$; all these edges have the same label~$(\eps,c)$.
\item Finally, if $K$ has an edge $k\to k'$ labeled $(a,b)$ and at the same time $L$ has an edge $l\to l'$ labeled $(b,c)$, where $b$ is the same letter, then we add an edge $(k,l)\to(k',l')$ labeled $(a,c)$ in the new graph.
\end{itemize}
Any path in the new graph is projected into two paths in $K$ and $L$. Let $(p,q)$ and $(u,v)$ be the pairs of words that can be read along these projected paths in $K$ and $L$ respectively, so $(p,q)\in S$ and $(u,v)\in T$. The  construction of the graph $K\times L$ guarantees that $q=u$ and that we read $(p,v)$ in the new graph along the path. So every pair $(p,v)$ of strings that can be read in the new graph belongs to the composition of $S$ and $T$.

On the other hand, assume that $(p,v)$ belong to the composition, i.e., there exists $q$ such that $(p,q)$ can be read along some path in $K$ and $(q,v)$ can be read along some path in $L$. Then the same word $q$ appears in the second components in the first path and in the first components in the second path. If we align the two paths in such a way that the letters of $q$ appear at the same time, we get a valid transition of the third type for each letter of $q$. Then we complete the path by adding transitions in between the synchronized ones (interleaving them in arbitrary way); all these transitions exist in the new graph by construction.

(c)~We add an additional outgoing edge labeled $(\eps,0)$ for each vertex of the graph; all these edges go to a special vertex that has no outgoing edges.
\end{proof}

\begin{remark}
Given a graph, one can check in polynomial time whether the corresponding relation is $O(1)$-valued~\cite[Theorem 5.3, p.~777]{weber}.
\end{remark}

\subsection{Properties of automatic complexity}\label{subsec:properties-automatic-complexity}

Now we are ready to prove the following simple result about the properties of \emph{automatic Kolmogorov complexity} functions, i.e., of functions $\KS_R$ where $R$ is some automatic description mode.

\begin{theorem}[Basic properties of automatic Kolmogorov complexity]\label{th:complexity}
\leavevmode
\begin{description}
\item[\textup{(a)}] There exists an automatic description mode $R$ such that $\KS_R(x)\le |x|$ for all strings $x$.
\item[\textup{(b)}] For every automatic description mode $R$ there exists some automatic description mode $R'$ such that  $\KS_{R'}(x0)\le \KS_R(x)$ and $\KS_{R'}(x1)\le \KS_R(x)$ for all~$x$.
\item[\textup{(c)}] For every automatic description mode $R$ there exists some automatic description mode $R'$ such that $\KS_{R'}(\bar x)\le \KS_R(x)$, where $\bar x$ stands for the reversed~$x$. 
\item[\textup{(d)}] For every automatic description mode $R$ there exists some constant $c$ such that $\KS(x)\le \KS_R(x)+c$. \textup(Here $\KS$ stands for the plain Kolmogorov complexity.\textup)
\item[\textup{(e)}] For every automatic description mode $R$ there exists some constant $c$ such that for every $n$ there exist at most $c2^n$ strings $x$ such that $\KS_R(x)<n$.
\item[\textup{(f)}] For every $c>0$ there exists an automatic description mode $R$ such that\\ $\KS_R(1^n)\le n/c$ for all n. 
\item[\textup{(g)}] For every automatic description mode $R$ there exists some $c>0$ such that $\KS_R(1^n)\ge n/c-1$ for all $n$.
\item[\textup{(h)}] For every two automatic description modes $R_1$ and $R_2$ there exists an automatic description mode $R$ such that $\KS_{R}(x)\le \KS_{R_1}(x)$ and $\KS_R(x)\le \KS_{R_2}(x)$ for all $x$.
\item[\textup{(i)}] There is no optimal mode in the class of automatic description modes. \textup(A mode $R$ is called optimal in some class if for every mode $R'$ in this class there exists some $c$ such that $\KS_R(x)\le\KS_{R'}(x)+c$ for all strings $x$.\textup)
\item[\textup{(j)}] For every automatic description mode $R$, if $x'$ is a substring of $x$, then $\KS_R(x')\le \KS_R(x)$.
\item[\textup{(k)}] Moreover, $\KS_R(xy)\ge \KS_R(x)+\KS_R(y)$ for every two strings $x$ and $y$.
\item[\textup{(l)}] For every automatic description mode $R$ and for every constant $\eps>0$ there exists an automatic description mode $R'$ such that $\KS_{R'}(xy)\le (1+\eps)\KS_R(x)+\KS_R(y)$ for all strings~$x$ and~$y$. 
\item[\textup{(m)}] Let $S$ be an automatic description mode. Then for every automatic description mode $R$ there exists an automatic description mode $R'$ such that $\KS_{R'}(y)\le \KS_{R}(x)$ for every $(x,y)\in S$.
\item[\textup{(n)}] If we allow a bigger alphabet $B$ instead of $\mathbb{B}=\{0,1\}$ as an alphabet for descriptions, then the complexity becomes $\log|B|$ times smaller, up to a constant factor that can be chosen arbitrarily close to~$1$. More precisely, for every automatic description mode $D$ with arbitrary alphabet $B$ and every $\eps>0$ there exist an automatic description mode $D'$ with binary alphabet such that
$$
\KS_{D'}(x)\le (1+\eps)\log |B|\KS_D(x)
$$
for all sufficiently long $x$.
\end{description}
\end{theorem}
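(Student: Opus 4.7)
The thirteen items fall into four natural groups, which I address in order of increasing substance. \emph{Routine items:} (a), (b), (c), (d), (g), (l) all follow from Proposition~\ref{prop:modes} plus small gadgets. For (a), the identity $\{(x,x)\}$ is realized by a one-state automaton with self-loops $(0,0)$ and $(1,1)$. For (b), apply Proposition~\ref{prop:modes}(c) twice (for $x0$ and $x1$) and unite via Proposition~\ref{prop:modes}(a). For (c), reverse every edge of the automaton for $R$: a path in the reversed graph reading $(\bar p,\bar x)$ corresponds to a path in $R$ reading $(p,x)$, so $\KS_{R'}(\bar x)\le |\bar p|=\KS_R(x)$. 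For (d), since $R$ is automatic (hence enumerable) and $O(1)$-valued, prefix each $R$-description with a constant-length index selecting one of the at most $c$ intended images; this yields a partial computable function witnessing $\KS(x)\le \KS_R(x)+O(1)$. For (g), take the disjoint union. For (l), the composition $R\circ S$ is automatic by Proposition~\ref{prop:modes}(b) and remains $O(1)$-valued because both factors are.

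\emph{Monotonicity via sub-paths:} items (i) and (j) crucially use the design choices that path endpoints are arbitrary and each label emits at most one letter. For (i), the sub-path of an $R$-realizing path $\pi$ that emits the factor $x'$ reads some factor $p'$ of $p$, so $(p',x')\in R$. For (j), take an optimal $\pi$ for $(p,xy)$ and split it at the vertex where the running output first reaches length $|x|$; such a vertex exists because each edge emits at most one letter. The two sub-paths yield $(p_x,x),(p_y,y)\in R$ with $p=p_xp_y$, hence $\KS_R(x)+\KS_R(y)\le |p_x|+|p_y|=|p|=\KS_R(xy)$.

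\emph{Unary complexity and non-optimality:} for (e), take a cycle $q_0\to q_1\to\cdots\to q_{c-1}\to q_0$ whose edge out of $q_0$ is labeled $(b,1)$ for each $b\in\{0,1\}$ and whose other edges are labeled $(\eps,1)$. Starting at $q_0$ and reading $k$ input bits, then appending at most $c-1$ further $\eps$-transitions, produces $1^n$ for every $ck\le n\le c(k+1)-1$, so $\KS_R(1^n)\le \lfloor n/c\rfloor\le n/c$ for $n\ge c$; for $n<c$ one starts at $q_{c-n}$ with empty input. For (f), the key structural lemma is that in an $O(1)$-valued automatic relation no cycle with all $\eps$-input emits anything: otherwise the cycle, traversed $t$ times as a standalone path, yields infinitely many distinct images of the empty description. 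Consequently any input-free segment of a path has length bounded by the number of states $N$, so $|x|\le N(|p|+1)$ and $\KS_R(1^n)\ge n/N-1$. Item (h) follows at once: apply (f) to an allegedly optimal $R$ to obtain $N$, then (e) with $c'>N$ yields an $R'$ violating optimality.

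\emph{Concatenation --- the main obstacle:} for (k), a naive $p_xp_y$ is not parseable by an automaton, so $p_x$ must be encoded self-delimitingly with overhead only $\eps|p_x|$. Setting $k=\lceil 1/\eps\rceil$, I would break $p_x$ into length-$k$ blocks, prepend each block with a marker $0$, close with a terminator $1$, and append $p_y$ unchanged; the total length is $(1+1/k)|p_x|+O(1)+|p_y|$, with the additive $O(1)$ absorbed by running the argument with $\eps/2$ and treating bounded $\KS_R(x)$ separately. The automaton for $R'$ is a product of a $(k+1)$-state block-parser with the automaton for $R$: the parser swallows markers, feeds decoded block bits into the $R$-component, and on reading the terminator $1$ switches to a pass-through phase in which bits of $p_y$ are fed directly into $R$; $R$'s internal $\eps$-input transitions remain free to fire throughout. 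Item (m) is analogous: groups of $\lfloor\log_2|B|\rfloor$ bits become one $B$-symbol (and conversely), yielding the $(1+o(1))$ overhead.
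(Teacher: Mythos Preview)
Your proof follows the paper's in every item: identity for (a), Proposition~\ref{prop:modes}(c) plus union for (b), edge reversal for (c), decidability-plus-index for (d), a single cycle with one input-bearing edge for (e), the ``$\varepsilon$-input cycles emit nothing'' observation for (f), disjoint union for (g), the combination (e)+(f) for (h), sub-path extraction for (i)--(j), composition for (l), and block recoding for (m). Your block-marker scheme for (k) is the paper's layered construction phrased as a parser--product, and you are in fact more careful than the paper in flagging the additive $O(1)$ and explaining how to absorb it via the $\varepsilon/2$ trick together with the finiteness of $\{x:\KS_R(x)\le M\}$.

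One point in (k) needs tightening. When the terminator $1$ fires, the pass-through phase must be permitted to enter an \emph{arbitrary} $R$-state---equivalently, to jump into a fresh copy of $R$, which is exactly what the paper's ``additional copy'' provides---rather than simply continuing to feed $p_y$ into the same $R$-component from its current state. The path in $R$ witnessing $(p_y,y)$ need not begin at the vertex where the path witnessing $(p_x,x)$ ended, so without this nondeterministic restart your product automaton will in general fail to output $xy$; it will only output strings $z$ with $(p_xp_y,z)\in R$. A second, smaller imprecision occurs in (f): input-free segments can be arbitrarily long when $(\varepsilon,\varepsilon)$-cycles are present, so your sentence ``any input-free segment has length bounded by $N$'' is not literally true; but by your own lemma such cycles emit nothing and may be excised first, after which the bound $|x|\le N(|p|+1)$ does follow.
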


\begin{proof}\leavevmode
(a)~Consider an identity relation as a description mode; it corresponds to an automaton with one state.

(b)~This is a direct corollary of Proposition~\ref{prop:modes},~(c).

(c)~The definition of an automaton is symmetric (all edges can be reversed), and the $O(1)$-condition still holds.

(d)~Let $R$ be an automatic description mode. An automaton defines a decidable (computable) relation, so $R$ is decidable. Since $R$ defines a $O(1)$-valued function, a Kolmogorov description of some $y$ that consists of its $R$-description $x$ and the ordinal number of $y$ among all strings that are in $R$-relation to $x$ (in some natural ordering), is only $O(1)$ bits longer than~$x$.

(e)~This is a direct corollary of $d$, since there are less than $2^n$ strings of Kolmogorov complexity less than $n$. Or we may just count all the descriptions of length less than $n$. There are less than $2^n$ of them, and each describes only $O(1)$ strings.

(f)~Consider an automaton that consists of a cycle where it reads one input symbol $1$ and then produces $c$ output symbols $1$.  Here we consider first components of pairs as ``input symbols'' and second components as ``output symbols''  since the relation is considered as an $O(1)$-multivalued function. Recall that there are no restrictions on initial and finite states, so this automaton produces all pairs $(1^k, 1^l)$ where $(k-1)c\le l \le (k+1)c$.

(g)~Consider an arbitrary description mode, i.e., an automaton that defines some $O(1)$-valued relation. Then every cycle in the automaton that produces some output letter should also produce some input letter, otherwise an empty input string corresponds to infinitely many output strings. For any sufficiently long path in the graph we can cut away a minimal cycle, removing at least one input letter and at most $c$ output letters, where $c$ is the number of states, until we get a path of length less than~$c$.

(h)~This follows from Proposition~\ref{prop:modes},~(a).

(i)~This statement is a direct consequence of~(f) and~(g). Note that for finitely many automatic description modes there is a mode that is better than all of them, as (h) shows, but we cannot do the same for all description modes (as was the case for Kolmogorov complexity).

(j)~If $R$ is a description mode, $(p,x)$ belongs to $R$ and $x'$ is a substring of $x$, then there exists some substring $p'$ of $p$ such that $(p',x')\in R$. Indeed, we may consider the input symbols used while producing~$x'$.

(k)~Note that in the previous argument we can choose disjoint $p'$ for disjoint~$x'$.

(l)~Informally, we modify the description mode as follows: a fixed fraction of input symbols is used to indicate when a description of $x$ ends and a description of $y$ begins. More formally, let $R$ be an automatic description mode; we use the same notation $R$ for the corresponding automaton. Consider $N+1$ copies of $R$ (called $0$-, $1$-,\ldots, $N$-th layers). The outgoing edges from the vertices of $i$-th layer that contain an input symbol are redirected to $(i+1)$-th layer (the new state remains the same, only the layer changes, so the layer number counts the input length). The edges with no input symbol are left unchanged (and go to $i$-th layer as before). The edges from the $N$-th layer are of two types: for each vertex $x$ there is an edge with label $(0,\eps)$ that goes to the same vertex in $0$-th layer, and edges with labels $(1,\eps)$ that connect each vertex of $N$-th layer to all vertices of an additional copy of $R$ (so we have $N+2$ copies in total). If both $x$ and $y$ can be read (as outputs) along the edges of $R$, then $xy$ can be read, too (additional zeros should be added to the input string after groups of $N$ input symbols). We switch from $x$ to $y$ using the edge that goes from $N$th layer to the additional copy of~$R$ (using additional symbol $1$ in the input string). The overhead in the description is one symbol per every $N$ input symbols used to describe $x$. We get the required bound, since $N$ can be arbitrarily large.

The only thing to check is that the new automaton is $O(1)$-valued. Indeed, the possible switch position (when we move to the states of the additional copy of $R$) is determined by the positions of the auxiliary bits modulo $N+1$: when this position modulo $N+1$ is fixed, we look for the first $1$ among the auxiliary bits. This gives only a bounded factor $(N+1)$ for the number of possible outputs that correspond to a given input.

(m)~The composition $S\circ R$ is an automatic description mode due to Proposition~\ref{prop:modes}, (b).
  
(n)~Take the composition of a given description mode $R$ with a mode that provides block  encoding of inputs. Note that block encoding can be implemented by an automaton. There is some overhead when $|B|$ is not a power of $2$, but the corresponding factor becomes arbitrarily close to $1$ if we use block code with large block size.
\end{proof}

Not all these results are used in the sequel; we provide them for comparison with the properties of the standard Kolmogorov complexity function.  Still let us introduce a name for the property (k) since it plays an important role in the sequel.  

\begin{definition}\label{def:superadditive}
A function $f$ with non-negative values defined on binary strings is called \emph{superadditive} if $f(xy)\ge f(x)+f(y)$ for any two strings $x$ and $y$. 
\end{definition}

Now (k) can be reformulated as follows: \emph{for every automatic description mode $R$ the function $\KS_R$ is superadditive}. As we will see (Sections~\ref{subsec:fsd-ac} and~\ref{subsec:stateless}), the characterization of normality (and finite-state dimension, see below) in terms of automatic complexity can be extended to all upper bounds for Kolmogorov complexity that are superadditive, and also to all superadditive functions satisfying the property (e). Note that usual Kolmogorov complexity is not superadditive for obvious reasons: say, $\KS(xx)$ is close to $\KS(x)$, not to $2\KS(x)$. Note also that a superadditive function equals $0$ for the empty string (let $x$ and $y$ be empty in the definition).

\section{Normality and incompressibility}\label{sec:normal-incompressible}

\subsection{Normal sequences and numbers}\label{subsec:normal}

Consider an infinite bit sequence $\alpha=a_0a_1a_2\ldots$ and some integer $k\ge 1$. Split the sequence $\alpha$ into $k$-bit blocks: $\alpha=A_0A_1\ldots$. For every $k$-bit string $r$ consider the limit frequency of $r$ among the $A_i$, i.e. the limit of $\#\{i\colon  i< N \text{ and } A_i=r\}/N$ as $N\to\infty$. This limit may exist or not; if it exists for some $k$ and for all $r$, we get a probability distribution on $k$-bit strings. 

\begin{definition}\label{def:normal}
A sequence $\alpha$ is \emph{normal} if for every number $k$ and every string $r$ of length $k$ this limit exists and is equal to~$2^{-k}$.
\end{definition}

Sometimes sequences with these properties are called \emph{strongly normal} while the name ``normal'' is reserved for sequences that have this property for $k=1$.

There is a version of the definition of normal sequences that considers \emph{all} occurrences of some string $r$ in $\alpha$ (while Definition~\ref{def:normal} considers only \emph{aligned} ones, whose starting point is a multiple of $k$). In this ``non-aligned'' version we require that the limit of $\#\{i<N: \alpha_i \alpha_{i+1}\ldots\alpha_{i+k-1}=r\}/N$ equals $2^{-k}$ for all $k$ and for all strings $r$ of length $k$.  A classical result\footnote{In fact, this result has a rather complicated history. The original definition of normal numbers was given by Borel~\cite{borel1909}. He required that every $k$-bit strings appears with frequency $2^{-k}$ among blocks that we get when we delete some finite prefix of the sequence and cut the rest into $k$-bit blocks. This implies both aligned and non-aligned normality (the aligned normality is the special case when the prefix is empty, the non-aligned normality can be shown by averaging frequencies for prefixes of length $0,1,\ldots,k-1$). Borel noted that his definition follows from non-aligned normality (``La propri\'et\'e caract\'eristique'', p.~261). However, he gave no proof, and the relation between these three definitions (aligned, non-aligned and Borel's definition that implies both) was clarified much later. Pillai~\cite{pillai1940}, correcting his earlier paper~\cite{pillai1939},  showed that aligned normality implies Borel's definition. Niven and Zuckerman~\cite{niven-zuckerman1951} gave a proof of Borel's claim.  Cassels~\cite{cassels1952} provided an alternative proof for the result of Niven and Zuckerman, while Maxfield~\cite{maxfield1952} provided an alternative proof for the result of Pillai. See also~\cite{niven1956}; a more recent exposition can be found, e.g., in a book of Kuipers and Niederreiter~\cite[Chapter 1, Section 8]{niederreiter}; it uses as a tool the Piatetski-Shapiro criterion (see Section~\ref{subsec:piatetski-shapiro} below),  Bugeaud's book~\cite[Sect.~4.1, Equivalent definitions of normality]{bugeaud}, or in the Becher -- Carton chapter in a recent collection~\cite[Theorem 10]{bc}. Even the latter exposition is quite technical and does not use the relation between normality and finite-state machines, though this relation is presented later in the chapter~\cite[sect~5]{bc}.} says that this is an equivalent notion, and we give below (Section~\ref{subsec:non-aligned}) a simple proof of this equivalence using automatic complexity.  Before this proof is given, we will distinguish the two definitions by using the name ``non-aligned-normal'' for the second version.

A real number is called \emph{normal} if its binary expansion is normal (we ignore the integer part). If a number has two binary expansions, like $0.0111\ldots=0.1000\ldots$, both expansions are not normal, so this is not a problem.

A classical example of a normal number is the \emph{Champernowne number}~\cite{champernowne} 
    $$0.0\, 1\, 10\, 11\, 100\, 101\, 110\, 111\, 1000\,1001\ldots$$
(the concatenation of all positive integers in binary). Let us sketch the (standard) proof of its normality using the non-aligned version of normality definition.\footnote{Later we will derive the normality of this sequence from Theorem~\ref{th:normality-condition}. Still we want to give an idea what kind of arguments can be avoided by using our tools.}  All $N$-bit numbers in the Champernowne sequence form a block that starts with $10^{N-1}$ and ends with $1^N$. Note that every string of length $k\ll N$ appears in this block with probability close to $2^{-k}$, since each of $2^{N-1}$ strings (after the leading $1$ for the $N$-bit numbers in the Champernowne sequence) appears exactly once. The deviation is caused by the leading $1$'s and also by the boundaries between the consecutive $N$-bit numbers where the $k$-bit substrings are out of control. Still the deviation is small since $k\ll N$.

This is not enough to conclude that the Champernowne sequence is (non-aligned) normal, since the definition speaks about frequencies in all prefixes; the prefixes that end on a boundary between two blocks are not enough. The problem appears because the size of a block is comparable to the length of the prefix before it. To deal with arbitrary prefixes, let us note that if we ignore \emph{two} leading digits in each number (first $10$ and then $11$) instead of one, the rest is periodic in the block (the block consists of two periods). If we ignore three leading digits, the block consists of four periods, etc. An arbitrary prefix is then close to the boundary between these sub-blocks, and the distance can be made small compared to the total length of the prefix. (End of the proof sketch.)

In fact, the full proof that follows this sketch is quite tedious. There are much more general reasons why this number is normal, as we will see in Section~\ref{subsec:champernowne}, where this result becomes an immediate corollary of the sufficient condition for normality in terms of Kolmogorov complexity (and this condition in its turn is a easy consequence of the criterion of normality in terms of automatic complexity).

\medskip

The definition of normality can be given for an arbitrary alphabet (instead of the binary one), and we get the notion of \emph{$b$-normality} of a real number for every base $b\ge 2$. It is shown by Cassels~\cite{cassels1959} and Schmidt~\cite{schmidt1960} that for different bases we get non-equivalent notions of normal real numbers; the proof is rather difficult. The numbers in $[0,1]$ that are normal for every base are called \emph{absolutely normal}. Their existence can be proved by a probabilistic argument. Indeed, for every base $b$, almost all reals are $b$-normal (the non-normal numbers have Lebesgue measure $0$ by the Strong Law of Large Numbers). Therefore the numbers that are not absolutely normal form a null set (a countable union of the null sets for each $b$). The constructive version of this argument shows that there exist computable absolutely normal numbers. This result goes back to an unpublished note of Turing (1938, see~\cite{becher-turing}).

In the next section we prove the connection between normality and automatic complexity (Theorem~\ref{th:compress}): a sequence $\alpha$ is normal if for every automatic description mode $D$ the complexities $\KS_D$ of its prefixes never become much smaller than their lengths.

\subsection{Normality and incompressibility}\label{subsec:charac}

\medskip
\begin{theorem}\label{th:compress}
A sequence $\alpha=a_0a_1a_2\ldots$ is normal if and only if 
$$
\liminf_{n\to\infty} \frac{\KS_R(a_0a_1\ldots a_{n-1})}{n}\ge1
$$
for every automatic description mode $R$.
\end{theorem}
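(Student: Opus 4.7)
The plan is to prove the two directions separately.

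\textbf{($\Leftarrow$, contrapositive: $\alpha$ not normal implies some $R$ with liminf below $1$).} Suppose $\alpha$ fails normality at some block length $k$. By compactness of the probability simplex on $\{0,1\}^k$, some subsequence of the empirical aligned-$k$-block distributions $\pi_N$ of $\alpha$ converges to a non-uniform limit $\pi^*$, so $H(\pi^*)<k$ strictly. I would take $R$ to be the decoder of a binary prefix code on $\{0,1\}^k$ with codeword lengths $\ell(B)\approx-\log_2\pi^*(B)$: a finite automaton whose states realise the decoding tree, with an edge at each leaf that returns to the root and emits the $k$-bit block labelling that leaf. Such an $R$ is automatic and $1$-valued. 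Along the chosen subsequence the total codeword length on $x_{Nk}$ is $N\cdot\mathbb{E}_{\pi_N}[\ell]\to N(H(\pi^*)+O(1))$. To force the ratio strictly below $1$ in the borderline case where $k-H(\pi^*)$ is small, I would absorb the per-block Shannon--Fano overhead by replacing $k$-blocks with $rk$-blocks for large $r$ (failure of aligned-$k$-normality forces failure of aligned-$rk$-normality, so non-uniform subsequence-limits $\sigma^*$ on $\{0,1\}^{rk}$ exist) or by a finite-precision arithmetic-coding-style transducer; either makes the per-output-bit overhead $o(1)$.

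\textbf{($\Rightarrow$: $\alpha$ normal implies every $R$ satisfies the bound).} Fix an automatic description mode $R$ presented by a graph with $s$ states and valuedness constant $c$, fix $\varepsilon>0$, and choose $k$ so large that $(\log c)/k<\varepsilon$. Take $n=Nk$ and let $p$ be an optimal $R$-description of $x_n=\alpha_0\ldots\alpha_{n-1}$ of length $m=\KS_R(x_n)$, realised along a path in $R$. Record the synchronisation states $q_0,q_1,\ldots,q_N$ reached when the output has produced $0,k,\ldots,Nk$ symbols, and the input-lengths $m_1,\ldots,m_N$ consumed between consecutive syncs, so $\sum m_i=m$. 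The $i$-th aligned $k$-block $B_i$ of $\alpha$ lies in
\[
T_{q,q',\mu}\;:=\;\bigl\{y\in\{0,1\}^k:\ (p',y)\in R\text{ along a path from }q\text{ to }q'\text{ for some }p'\in\{0,1\}^\mu\bigr\}
\]
taken with $(q,q',\mu)=(q_{i-1},q_i,m_i)$, and the $c$-valuedness forces $|T_{q,q',\mu}|\le c\cdot 2^\mu$. Let $\phi_{q,q',\mu}$ be the fraction of indices $i$ with parameters $(q,q',\mu)$ and $\nu_{q,q',\mu}$ the empirical $k$-block distribution on those indices (supported in $T_{q,q',\mu}$); then $\pi_N=\sum_{q,q',\mu}\phi_{q,q',\mu}\nu_{q,q',\mu}$ and by concavity of Shannon entropy
\[
H(\pi_N)\;\le\;\sum_{q,q',\mu}\phi_{q,q',\mu}H(\nu_{q,q',\mu})\;\le\;\sum_{q,q',\mu}\phi_{q,q',\mu}\log|T_{q,q',\mu}|\;\le\;\frac{m}{N}+\log c.
\]
Since $\alpha$ is normal, $\pi_N\to$ uniform and $H(\pi_N)\to k$, so $m\ge N(k-\log c-o(1))=n\bigl(1-(\log c)/k-o(1)\bigr)>n(1-\varepsilon)$ for all large $N$. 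For $n$ not a multiple of $k$, Theorem~\ref{th:complexity}(i) gives $\KS_R(x_{Nk})\le\KS_R(x_n)$, so the bound transfers with an additive loss of $O(k)$, negligible after dividing by $n$. As $\varepsilon>0$ was arbitrary, $\liminf_n \KS_R(x_n)/n\ge 1$.

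\textbf{Main obstacle.} The delicate point is the $\Leftarrow$ direction in the borderline regime: the elementary Shannon--Fano decoder costs one bit per block, which may outweigh the entropy gap $k-H(\pi^*)$ when the latter is small. The remedy (blocking to super-blocks, or a finite-precision arithmetic-coding transducer) is classical information theory but needs to be packaged as an automatic relation in the precise sense of Definition~\ref{def:automatic} and checked to be $O(1)$-valued. The $\Rightarrow$ direction, by contrast, reduces once one spots the counting bound $|T_{q,q',\mu}|\le c\cdot 2^\mu$ afforded by $O(1)$-valuedness to a clean entropy-concavity calculation using the sync-state partition.
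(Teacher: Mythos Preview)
Your $(\Leftarrow)$ direction matches the paper's argument. The paper makes the ``pass to longer blocks'' step explicit via a doubling argument: if $A$ is the limit distribution on $k$-blocks and $A_1A_2$ on $2k$-blocks (along a further subsequence), then $A=\tfrac12(A_1+A_2)$ as distributions, concavity of $H$ gives $H(A)\ge\tfrac12(H(A_1)+H(A_2))$, and subadditivity gives $H(A_1A_2)\le H(A_1)+H(A_2)\le 2H(A)$, so the entropy gap at least doubles at each step. You should include this, since merely knowing that the limit $\sigma^*$ on $rk$-blocks is non-uniform does not by itself force its entropy gap to exceed the $+1$ coding overhead.

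Your $(\Rightarrow)$ direction contains a genuine error. Shannon entropy is \emph{concave}, so for a mixture $\pi_N=\sum_{q,q',\mu}\phi_{q,q',\mu}\,\nu_{q,q',\mu}$ one has $H(\pi_N)\ge\sum\phi\,H(\nu)$, the reverse of what you wrote. The correct upper bound is the chain-rule form $H(\pi_N)\le H(\phi)+\sum\phi\,H(\nu)$, and then your computation yields only $H(\pi_N)\le H(\phi)+m/N+\log c$. This is useful only if $H(\phi)$ is bounded independently of $N$; a priori the coordinate $\mu$ ranges over $\{0,\ldots,m\}$, making $H(\phi)$ of order $\log n$ and destroying the bound. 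The argument can be rescued by first observing that along an \emph{optimal} description path every output-free cycle may be excised without lengthening the input, which forces each $m_i\le O(ks)$ and hence $H(\phi)=O(\log(ks))$; but this step is absent from your write-up. So the direction you flagged as routine is in fact the one with the gap.

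The paper avoids this bookkeeping entirely. It uses the superadditivity $\KS_R(A_0\ldots A_{N-1})\ge\sum_i \KS_R(A_i)$ of Theorem~\ref{th:complexity}(j), then $\KS(A_i)\le \KS_R(A_i)+O(1)$ from Theorem~\ref{th:complexity}(d), and finally the elementary fact that the average plain Kolmogorov complexity over all $k$-bit strings is $k-O(1)$. Normality says each $k$-bit string has the same limit frequency among the $A_i$, so $\tfrac1N\sum_i\KS_R(A_i)\ge k-O(1)$, giving $m\ge n(1-O(1)/k)$ with no mixing-entropy term to control.
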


\begin{proof}
First, let us show that a sequence that is not normal is compressible. Here is the sketch: Assume that for some bit sequence $\alpha$ and for some $k$ the requirement for aligned $k$-bit blocks is not satisfied. Using compactness arguments, we can find a sequence of lengths $N_i$ such that for the prefixes of these lengths the frequencies of $k$-bit blocks do converge to some probability distribution  $P$ on $\mathbb{B}^k$, but this distribution is not uniform.  Then its Shannon entropy $H(P)$ is less than $k$, and there is a prefix-free code for $P$ of small average length. We use this code to get an efficient automatic description for the prefixes of length $N_i$. Let us explain the details.

Recall the basic notions of Shannon information theory (see, e.g.,~\cite[Sect.~7.1]{kolmbook}). Consider a random variable $\pi$ with finite range. It corresponds to a  probability distribution $P$ on its range. The entropy of $\pi$ (or $P$, they can be used interchangeably) is defined as follows. Assume that the range consists of $m$ elements having probabilities $p_1,\ldots,p_m$. Then, by definition,
$$
H(P)=H(\pi)=\sum_{i=1}^m p_i\log \frac{1}{p_i}.
$$
If some $p_i$ are zeros, the corresponding terms are omitted (this is natural since $p\log (1/p)$ converges to $0$ as $p\to 0$). The Shannon entropy of a variable with $m$ values is at most $\log m$, and it is equal to $\log m$ if and only if the distribution is uniform (all $m$ elements have the same probability). 

The Shannon entropy is related to the average length of prefix codes. A \emph{prefix-free code} for a variable with $m$ values is a $m$-tuple of binary strings that is prefix-free (none of the strings is a prefix of another one, so the code can be uniquely decoded from left to right). These strings (called \emph{codewords}) encode $m$ values of the random variable, and the average length of the code is defined as $\sum_{i=1}^m p_i |x_i|$ where $p_i$ is the probability of $i$th value and $x_i$ is its encoding (and $|x_i|$ stands for the length). A basic result of the Shannon information theory guarantees that (a)~the average length of every prefix-free code for $\pi$ is at least $H(P)$, and (b)~there exists a prefix-free code of average length close to $H(P)$, namely, of length at most $H(P)+1$ (Shannon -- Fano code)\footnote{This ``$+1$'' overhead is due to rounding if the frequencies are not powers of $2$. To prove this result, for each $p_i$ we consider the minimal integer $k_i$ such that $2^{-k_i}\le p_i$, and encode $i$th letter by a binary string of length $k_i$.}.  The following lemma uses this code to construct an automatic description mode.

\begin{lemma}\label{lem:shannon-fano}
Let $k$ be some integer and let $P$ be a distribution on a set $\mathbb{B}^k$ of $k$-bit blocks. Then there exists an automatic description mode $R$ such that for every string $x$ whose length is a multiple of $k$, we have
\[
\KS_R(x)\le \frac{|x|}{k}\left(\sum_B Q(B) \log\frac{1}{P(B)} + 1\right)
\]
where $Q$ is the distribution on $k$-bit blocks appearing when $x$ is split into blocks of size $k$.
\end{lemma}

\begin{proof}[Proof of Lemma~\ref{lem:shannon-fano}]
Consider the Shannon -- Fano code for $k$-bit blocks based on the distribution $P$. Then the length of the codeword for arbitrary block $B$ is at most $\log (1/P(B))+1$. This code is prefix-free and can be uniquely decoded bit by bit by a finite automaton that reads the input string until a codeword is found, and then outputs the coressponding block and starts waiting for the next codeword. Therefore, this code corresponds to some automatic description mode $R$.  A string $x$ is a concatenation of $|x|/k$ blocks of length $k$, and has a description whose length is the sum of the lengths of the codes for these blocks.  Each block $B$ has frequency $Q(B)$, i.e., appears $\frac{|x|}{k}Q(B)$ times, and this number should be multiplied by the codeword length, which is at most $\log (1/P(B))+1$.  The overhead term adds $Q(B)$ for each block $B$, so we get $1$ in total.
\end{proof}

Note that this lemma allows some values $P(B)$ to be zeros; if such a block appears in $x$, the right hand side is infinite and the inequality is vacuous. 

We apply this lemma to $N_i$-bit prefixes of $\alpha$ for which the corresponding distributions $Q_i$ on $\mathbb{B}^k$ converge to some distribution $P$ that is not uniform, so $H(P)<k$. We want to construct an automatic description mode $R$ such that $\liminf \KS_R(x_i)/|x_i|<1$ where $x_i$ is the $N_i$-bit prefix of $\alpha$, thus proving Theorem~\ref{th:compress} in one direction.

Assume first that $H(P)<k-1$ and $P$ is everywhere positive (no blocks have zero probability). Then Lemma~\ref{lem:shannon-fano} applied to $x_i$ gives the desired result immediately. Indeed, $Q_i(B)$ converge to $P(B)$ and the sum over $B$ in the right hand side converges to $H(P)$.

It remains to deal with the two problems mentioned. We start with the first one: what to do if $H(P)$ is close to $k$ and the gap is less than~$1$. In this case we switch to larger blocks to get the gap we need. It is done in the following way.

Selecting a subsequence, we may assume without loss of generality that the limit frequencies exist also for (aligned) $2k$-bit blocks, so we get a random variable $P_0P_1$ whose values are $2k$-bit blocks (and $P_0$ and $P_1$ are their first and second halves of length $k$). The variables $P_0$ and $P_1$ correspond to even and odd blocks respectively. They may be dependent, and their distributions may differ from the initial distribution $P$ for $k$-bit blocks. Still we know that $P$ is the average of $P_0$ and $P_1$ since $P$ is computed for all blocks, and $P_0$/$P_1$ correspond to odd/even blocks. A convexity argument (the function $p\mapsto -p\log p$ used in the definition of entropy has negative second derivative) shows that $H(P)\ge [H(P_0)+H(P_1)]/2$.\footnote{There is more conceptual way to explain this: consider a random bit $b$ and random variable $P'$ that has the same distribution as $P_0$ when $b=0$ and the same distribution as $P_1$ when $b=1$. Then $P'$ has the same distribution as $P$, and $H(P')=H(P)$ is not smaller than $H(P'\cnd b)=[H(P_0)+H(P_1)]/2.$} Then
\[
H(P_0P_1)\le H(P_0)+H(P_1)\le 2H(P),
\]
 so $P_0P_1$ has twice bigger gap between entropy and length (at least). Repeating this argument, we can find $k$ such that the difference between length and entropy is greater than $1$. 

Now the second problem: what to do if some values of $P$ are zeros (some blocks have zero probability in the limit distribution). In this case we cannot use the code provided by Lemma~\ref{lem:shannon-fano}, since some blocks, while having zero limit frequency, still appear in the prefixes of $\alpha$ and have no codeword. Their limit frequency is zero, so it is not important how long would be the corresponding codewords, but some codewords for them are needed.

There are several ways to overcome this obstacle. For example, one may change the code provided by the Lemma, adding leading $0$ to all codewords, and use codewords starting from $1$ to encode ``bad'' blocks (having zero limit probabilities). Then all blocks, including the bad ones, will have codes of finite length, the constant $1$ in the Lemma is replaced by $2$ (this does not hurt), and we can proceed as before. 

 The other possibility is to use some $P'$ that is close to $P$ and has all non-zero probabilities. Then the limit average length of code will be bigger, since we use the code for $P'$ while the actual (limit) distribution is $P$. The overhead (called the Kullback -- Leibler distance between $P$ and $P'$) can be made arbitrarily small due to continuity, and we still get a contradiction making the overhead smaller than the gap.
 
This finishes the proof in one direction.

Now we need to prove that an arbitrary normal sequence $\alpha$ is incompressible. Let $R$ be an arbitrary automatic description mode. Consider some $k$ and split the sequence into $k$-bit blocks $\alpha=A_0A_1 A_2\ldots$. (Now $A_i$ are just the blocks in $\alpha$, not random variables). We will show that $$\liminf \KS_R(A_0 A_1\ldots A_{n-1})/nk$$ cannot be much smaller than~$1$. More precisely, we will show that
$$\liminf \frac{\KS_R(A_0 A_1\ldots A_{n-1})}{nk}\ge1-\frac{O(1)}{k},$$ 
where the constant in $O(1)$ does not depend on $k$. This is enough, because (i)~adding the last incomplete block can only increase the complexity and the change in length is negligible, and (ii)~the value of $k$ may be arbitrarily large. 
 
Now let us prove this bound for some fixed~$k$. Recall that $$\KS_R(A_0 A_1\ldots A_{n-1})\ge \KS_R(A_0)+\KS_R(A_1)+\ldots+\KS_R(A_{n-1})$$ and that $\KS(x)\le \KS_R(x)+O(1)$ for all $x$ and some $O(1)$-constant that depends only on $R$ (Theorem~\ref{th:complexity}). By assumption, all $k$-bit strings appear with the same limit frequency among $A_0$, $A_1$,\ldots, $A_{n-1}$. It remains to note that the average Kolmogorov complexity $\KS(x)$ of all $k$-bit strings is $k-O(1)$; indeed, the fraction of $k$-bit strings that can be compressed by more than $d$ bits ($\KS(x)<k-d$) is at most $2^{-d}$, and the series $\sum d 2^{-d}$ (the upper bound for the average number of bits saved by compression) has finite sum.

Alternatively, one may also note that for $d=\log k$ we have $O(1/k)$ fraction of strings that are compressible more than by $d$ (and at most by $k$) bits, and all other strings are compressible at most by $d=\log k$ bits, so the average compression is $O(1)+O(\log k)=O(\log k)$, and $O(\log k)$ bound is enough for our purposes (we do not need the stricter $O(1)$ bound proven earlier).
\end{proof}

A basic result of algorithmic information theory (Schnorr--Levin complexity characterization of randomness) says that for a Martin-L\"of random sequence $a_0a_1a_2\ldots$ we have $\KP(a_0\ldots a_{n-1})\ge n -O(1)$ (where $\KP(x)$ stand for the prefix complexity of $x$). Since prefix and plain complexity of $x$ differ by $O(\log |x|)$, we conclude that $\KS(a_0\ldots a_{n-1})\ge n-o(n)$ for all Martin-L\"of random sequences. Theorem~\ref{th:complexity}, (d) implies that the same is true for automatic complexities; therefore, according to our criterion (Theorem~\ref{th:compress}), every Martin-L\"of random sequence is normal (a classical result of algorithmic information theory, an effective version of the law of large numbers). Recalling that almost all bit sequences are Martin-L\"of random, we conclude that almost all bit sequences are normal (one of the first Borel's results about normality,~\cite[p.~261]{borel1909}). 

\section{Using the incompressibility criterion for normality}\label{sec:applications}

In this section we use the incompressibility characterization of normality to provide simple proofs for several classical results about normal sequences. First we prove that one may consider non-aligned frequencies of blocks when defining normality (Section~\ref{subsec:non-aligned}). Then we give a simple proof of Piatetski-Shapiro's theorem from~\cite{piatetski-shapiro,piatetski-shapiro1957} (Section~\ref{subsec:piatetski-shapiro}). We give a simple sufficient condition for the normality of Champernowne-type sequences (Theorem~\ref{th:normality-condition}, Section~\ref{subsec:champernowne}). This condition implies the normality of Champernowne's sequence; it is then applied to provide simple proofs of the results from Copeland--Erd\"os~\cite{copeland-erdos} (Section~\ref{subsec:copeland-erdos}), Besicovitch~\cite{besicovitch} (Section~\ref{subsec:besicovitch}) and Calude -- Staiger -- Stephan (Section~\ref{subsec:csr}).

\subsection{Non-aligned version of normality}\label{subsec:non-aligned}

Recall the proof of Theorem~\ref{th:compress}. A small modification of this proof adapts it to the non-aligned definition of normality, thus providing the proof of the equivalence between aligned and non-aligned definitions of normality. Let us see how this is done.

 Let $\alpha$ be a sequence that is not normal in the non-aligned version. This means that for some $k$ the (non-aligned) $k$-bit blocks do not have the correct limit distribution. These blocks can be split into $k$ groups according to their starting positions modulo~$k$. In one of the groups blocks do not have a correct limit distribution (otherwise the average distribution would be correct, too). So we can delete some prefix (less than $k$ symbols) of our sequence and get a sequence that is not normal in the aligned sense. Theorem~\ref{th:compress} says that its prefixes are compressible. The same is true for the original sequence since adding a fixed finite prefix (or suffix) changes complexity and length at most by~$O(1)$, after a suitable change of the description mode, as Theorem~\ref{th:complexity}, (a,b),  implies.

In the other direction the proof goes as follows (see the next paragraph for details). Let us assume that the sequence is normal in the non-aligned sense. The aligned frequency of some compressible-by-$d$-bits block (as well as any other block) can be only $k$ times bigger than its non-aligned frequency, which is exponentially small in $d$ (the number of saved bits), so we can choose the parameters to get the required bound.

Here are the details. Consider a non-aligned normal sequence, i.e., a sequence that does not satisfy the non-aligned version of normality definition. Now consider all blocks (strings) of length $k$ that are $d$-compressible in the sense that their $\KS_R$-complexity is smaller than $k-d$. There is at most $O(2^{k-d})$ of them, as Theorem~\ref{th:complexity}(e) says.  So their frequency among \emph{aligned} blocks in our sequence is at most $k2^{-d+O(1)}$.  Indeed, it can be only $k$ times bigger than their frequency among \emph{non-aligned} blocks: the numerator (the number of bad occurences) can only decrease, and the denominator (the total number of occurences) becomes $k$ times smaller if we consider only aligned occurences. 

For all non-$d$-compressible blocks $R$-compression saves at most $d$ bits, and for $d$-compressible blocks it saves at most $k$ bits, so the average number of saved bits (per $k$-bit block) is bounded by
$$
d+k2^{-d+O(1)}\cdot k = d +O(k^2 2^{-d}).
$$
We need this bound to be $o(k)$, i.e., we need that
$$
\frac{d}{k}+O(k2^{-d})=o(1)
$$
as $k\to\infty$. This can be achieved, for example, if $d=2\log k$. 

In this way we get the following corollary~\cite{borel1909,pillai1940,niven-zuckerman1951}:
\begin{corollary}
The aligned and non-aligned definitions of normality are equivalent.
\end{corollary}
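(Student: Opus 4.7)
The plan is to deduce both implications from Theorem~\ref{th:compress} rather than arguing about frequencies directly. For the direction \emph{aligned normal $\Rightarrow$ non-aligned normal} I would take the contrapositive: if $\alpha$ is not non-aligned normal then for some $k$ and some $k$-bit word $r$ the non-aligned frequency of $r$ in $\alpha$ fails to tend to $2^{-k}$. That non-aligned frequency is, up to a $o(1)$ boundary term, the arithmetic mean of the $k$ shifted aligned frequencies (starting the blocking at offsets $0,1,\ldots,k-1$), so at least one offset $j$ produces aligned frequencies that are wrong. Hence the shifted sequence $a_j a_{j+1}\ldots$ is not aligned normal, so by Theorem~\ref{th:compress} its prefixes are compressible by some automatic description mode $R$. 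Prepending the fixed word $a_0\ldots a_{j-1}$ can be realized by an automaton that first traverses a chain producing these $j$ letters on the output side without consuming any input and then behaves like $R$, giving an automatic description mode $R'$ with $\KS_{R'}(a_0\ldots a_{n-1}) \le \KS_R(a_j\ldots a_{n-1})+O(1)$. Thus $\alpha$'s prefixes are also automatically compressible, and Theorem~\ref{th:compress} forces $\alpha$ to fail aligned normality.

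For the direction \emph{non-aligned normal $\Rightarrow$ aligned normal} I would, given an automatic description mode $R$, cut the length-$nk$ prefix into $n$ aligned blocks $A_0,\ldots,A_{n-1}$ of length $k$ and use the superadditivity from Theorem~\ref{th:complexity}(j): $\KS_R(A_0\cdots A_{n-1})\ge\sum_i \KS_R(A_i)$. Two quantitative inputs do the work. First, Theorem~\ref{th:complexity}(d) together with the standard counting bound for plain Kolmogorov complexity says that the number of $k$-bit strings with $\KS_R(x)<k-d$ is at most $2^{k-d+O(1)}$. Second, the aligned frequency of any single $k$-bit string in $\alpha$ is at most $k$ times its non-aligned frequency (aligned starting positions form a $1/k$-fraction of all positions), hence at most $k\cdot 2^{-k}$ under non-aligned normality. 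Combining these, the aligned density of ``bad'' (compressible-by-more-than-$d$) blocks is at most $k\cdot 2^{-d+O(1)}$; bad blocks save at most $k$ bits and good blocks save at most $d$ bits, so the average saving per block is $d+O(k^2\,2^{-d})$, yielding
\[
\liminf_{n\to\infty} \frac{\KS_R(a_0\ldots a_{nk-1})}{nk}\;\ge\; 1-\frac{d}{k}-O(k\,2^{-d}).
\]
Taking, say, $d=2\log k$ and letting $k\to\infty$ drives both error terms to $0$; trailing incomplete blocks in a general prefix are handled by the substring monotonicity of Theorem~\ref{th:complexity}(i). Theorem~\ref{th:compress} then delivers aligned normality.

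The main obstacle is calibrating $d$ and $k$ in the second direction: the factor-$k$ loss when passing from non-aligned to aligned frequencies must be dominated by the exponential gain $2^{-d}$ from Kolmogorov counting, while $d$ itself must remain $o(k)$ so that the complexity density stays near $1$. Any weakly superlogarithmic choice of $d$ balances these demands. The first direction is essentially the observation that automatic complexity is insensitive to a fixed finite prefix, and this is immediate from the automaton construction, so no real obstacle arises there.
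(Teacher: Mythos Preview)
Your proposal is correct and follows essentially the same route as the paper: in one direction you average the shifted aligned frequencies to locate a bad offset, compress the shifted sequence via Theorem~\ref{th:compress}, and absorb the fixed prefix into the automaton; in the other direction you reuse the superadditivity plus Kolmogorov-counting argument, with the same factor-$k$ loss when comparing aligned to non-aligned frequencies and the same choice $d=2\log k$. The paper's own discussion is nearly identical, including the bound $d/k + O(k\,2^{-d})=o(1)$.
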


Note also that adding/deleting a finite prefix does not change the compressibility, and, therefore, normality. (For the non-aligned version of the normality definition it is obvious anyway, but for the aligned version it is not so easy to see directly, see the discussion of the original Borel's definition of normality above for historical details.)

\subsection{Piatetski-Shapiro theorem}\label{subsec:piatetski-shapiro}

Piatetski-Shapiro in~\cite{piatetski-shapiro} proved\footnote{The proof used ergodic theory. Later in~\cite{postnikov1952,postnikov-piatetski1957,piatetski-shapiro1957} alternative proofs that do not refer to ergodic theory were given.} the following result: \emph{if for some constant $c$ and for all $k$ every $k$-bit block appears in a sequence $\alpha$ with \textup(non-aligned\textup) $\limsup$-frequency at most $c2^{-k}$, then the sequence $\alpha$ is normal}. 

This result is an immediate byproduct of the proof of the normality criterion (Theorem~\ref{th:compress}). Indeed, in the argument above we had a constant factor in the $O(k2^{-d})$ bound of Section~\ref{subsec:non-aligned} for the average compression due to compressible blocks. If the compressible blocks appear at most $c$ times more often (as well as all other blocks, but this does not matter), we still have the same $O$-bound, so we get Piatetski-Shapiro's result (in aligned and non-aligned version at the same time; Piatetski-Shapiro considered the aligned version).
 
We can even allow the constant $c$ to  depend on $k$ if its growth as a function of $k$ is not too fast. Namely, the following stronger result was proven by Piatetski-Shapiro in~\cite{piatetski-shapiro1957}): 

\begin{theorem}[Piatetski-Shapiro theorem, strong version]\label{th:piatetski}
Let $\alpha$ be an infinite bit sequence. Assume that for every $k$ and for every $k$-bit block $B$ its aligned \textup(or non-aligned\textup) frequency in all sufficiently long prefixes of $\alpha$ does not exceed $c_k 2^{-k}$, where $c_k$ depends only on $k$ and $c_k=2^{o(k)}$. Then $\alpha$ is a normal sequence.
\end{theorem}

\begin{proof}
Note first that the non-aligned version of this result follows from the aligned version. Indeed, aligned frequency of arbitrary block may exceed the non-aligned frequency of the same block only by a factor of $k$, and $k=2^{o(k)}$, so this additional factor still keeps $c_k=2^{o(k)}$.

To prove the aligned version of the result, recall the proof of Theorem~\ref{th:compress}. Consider some threshold $d_k$ (to be chosen later). We split all $k$-bit blocks into two groups: the blocks that are compressible by more than $d_k$ bits, and all the other ones. The fraction of the blocks of the first type (called ``compressible'' blocks in the sequel) among all $k$-bit strings is at most $2^{-d_k}$. Therefore, by the assumption, if we split a long prefix of $\alpha$ into aligned $k$-bit blocks, the fraction of compressible blocks among them is bounded by (approximately) $c_k 2^{-d_k}$, and each of them is compressible by at most $k$ bits (for obvious reasons). All other blocks are compressible by at most $d_k$ bits, so the number of saved bits per block is at most $kc_k2^{-d_k}+d_k$. We need this amount to be $o(k)$ to finish the proof of normality as before, so we need to choose $d_k$ in such a way that
\[
d_k =o(k) \quad \text{and}\quad kc_k2^{-d_k} = o(k).
\]
The second condition says that $d_k$ exceeds $\log c_k$ and the difference tends to infinity. Since $c_k=2^{o(k)}$, one can easily satisfy both conditions (e.g., let $d_k$ be $\log c_k+\log k$).
\end{proof}

\begin{remark}
In fact, Piatetski-Shapiro's statement in~\cite{piatetski-shapiro1957} is a bit stronger: it assumes only that $c_k=2^{o(k)}$ for infinitely many $k$, i.e., that $\liminf_k \frac{\log c_k}{k}=0$. The proof remains the same (we use the assumption to get a lower bound for automatic complexity of a prefix by splitting it into blocks; it is enough to do this not for all $k$ but for infinitely many $k$). Also there is a minor technical difference: Piatetski-Shapiro considered real numbers $\alpha$ and the distribution of fractional parts of $\alpha q^k$, where $q$ is the base (we consider the case $q=2$, but this does not matter). The condition in~\cite{piatetski-shapiro1957} says that the density of fractional parts that fall inside some interval $\Delta\subset [0,1]$ is bounded by $f(|\Delta|)$ for a suitable $f$, where $|\Delta|$ is the length of $\Delta$. It is easy to see that one can consider only intervals $\Delta$ obtained by dividing $[0,1]$ into $q,q^2,\ldots$ parts (since any interval can be covered by these ``aligned'' intervals with bounded overhead). In this way we get the statement formulated above.\footnote{We go into all these details since the paper~\cite{piatetski-shapiro1957} is published (in Russian) in a quite obscure place: a volume in a series published by Moscow Pedagogical Institute. It seems that this volume is now (June 2019) missing even in the library of the very institute that published it (now it is called Moscow Pedagogical State University). Fortunately, this volume is available in the Russian State Library in Moscow (though it is included only in the paper cards version of the catalog, not in the electronic database).}
\end{remark}

\begin{remark}
The bound for $c_k$ in this theorem is optimal, as the following example (from~\cite{piatetski-shapiro1957}) shows. Consider a sequence $\alpha$ that is random with respect to Bernoulli measure with parameter $\frac{1}{2}+\delta$. Then the frequency of the most frequent $k$-bit block (all ones) is $(\frac{1}{2}+\delta)^k=2^{-k}2^{\eps k}$ for some constant $\eps$ that can be arbitrarily small if $\delta$ is small. On the other hand, $\alpha$ is not normal.
\end{remark}

Let us note that Piatetski-Shapiro's result easily implies a result of Wall~\cite{wall}. Recall that a real number is normal if its binary expansion is normal. We ignore the integer part (since it has only finitely many digits, adding it as a prefix would not matter anyway).

\begin{theorem}[Wall~\cite{wall}]\label{th:wall}
If $p$ and $q$ are rational numbers and $\alpha$ is normal real number, then $\alpha p + q$ is normal.
\end{theorem}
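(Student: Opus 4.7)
The plan is to apply Theorem \ref{th:compress}: since normality is equivalent to incompressibility against every automatic description mode, it is enough to show that if $\beta := \alpha p + q$ had prefixes compressible by some automatic description mode, then so would $\alpha$. Since $\alpha$ is normal it is irrational, hence $\beta$ is irrational too and has a unique binary expansion; after subtracting the integer part of $\beta$ I may assume $\beta \in [0,1)$, because modifying $\beta$ by a fixed integer changes each prefix complexity only by $O(1)$. I may also assume $p \neq 0$, since otherwise $\beta$ is rational and hence not normal, contradicting the hypothesis vacuously.

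The core of the argument will be to build an automatic description mode $S \subset \mathbb{B}^*\times\mathbb{B}^*$ that simulates the inverse affine map $\beta \mapsto \alpha = (\beta - q)/p$ at the level of binary expansions. I will arrange that for every sufficiently long prefix $w = \beta_0\ldots\beta_{n-1}$ there is a prefix $v = \alpha_0\ldots\alpha_{k-1}$ with $(w,v) \in S$ and $k \geq n - C$, where $C$ depends only on $p,q$. The transducer realizing $S$ will be obtained by composition (Proposition \ref{prop:modes}(b)) of two finite-state pieces: one that subtracts the fixed rational $q$ from its input, using a state that records the current phase inside the eventually-periodic binary expansion of $q$ together with a single borrow bit; and one that divides by $p=m/n$ via the standard long-division algorithm (after first multiplying through by $n$), whose only state is the remainder, which lies in a bounded set of size $|m|$. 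Each piece is visibly $O(1)$-valued, so their composition is an automatic description mode by the proposition.

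Given such $S$, the contradiction is immediate. If $\beta$ were not normal, Theorem \ref{th:compress} would supply an automatic description mode $R$ and $\delta > 0$ with $\KS_R(\beta_0\ldots\beta_{n-1}) \leq (1-\delta)n$ for infinitely many $n$. Theorem \ref{th:complexity}(l) applied to $R$ and $S$ would then furnish an automatic description mode $R'$ with $\KS_{R'}(v) \leq \KS_R(w)$ whenever $(w,v)\in S$. Plugging in the compressible $w = \beta_0\ldots\beta_{n-1}$ and the matching $v = \alpha_0\ldots\alpha_{k-1}$ with $k \geq n - C$ would yield $\KS_{R'}(\alpha_0\ldots\alpha_{k-1})/k \leq (1-\delta)n/(n-C)$, and the right-hand side tends to $1-\delta < 1$, contradicting the normality of $\alpha$ by Theorem \ref{th:compress}.

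The hard part will be verifying that the ``subtract $q$'' transducer really is finite-state, since generic binary addition of two infinite sequences is not: the crucial observation I will lean on is that the subtrahend $q$ is fixed and eventually periodic, so at every position the pair (phase in the period of $q$, pending borrow) takes only finitely many values, and the carry never needs to be ``looked up'' in an unbounded region of the tape. Once that point is settled, the rest is an assembly job using the closure properties already packaged in Proposition \ref{prop:modes} and Theorem \ref{th:complexity}, together with routine bookkeeping for the sign of $p$ (handled by a preliminary bit-flip transducer) and for the shift needed to bring $\beta$ into $[0,1)$.
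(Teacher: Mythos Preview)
Your approach is correct and follows the same high-level strategy as the paper (use Theorem~\ref{th:compress} together with Theorem~\ref{th:complexity}(l) and an automatic relation linking prefixes of $\alpha$ and $\alpha p+q$), but your decomposition differs. The paper does not build a ``subtract the rational $q$'' transducer at all: writing $p=m/n$ and $q=a/b$, it observes that $\alpha p+q=(\alpha\, m b + a n)/(nb)$, so that only multiplication and division by an \emph{integer} are needed, while addition of the integer $an$ is free because the integer part is discarded by definition. A single lemma then handles integer multiplication/division via the schoolbook long-division automaton. This is shorter than your route: the ``subtract a fixed rational'' transducer is the most delicate piece of your outline, and the common-denominator trick sidesteps it entirely.

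Your ``subtract $q$'' transducer does work, but the reason is not quite the one you give. Borrows in binary subtraction propagate right to left, so a deterministic left-to-right machine \emph{cannot} know the pending borrow without looking unboundedly far ahead; the construction succeeds because the automatic relations in this paper are nondeterministic (arbitrary paths, no fixed initial state). With state $=$ (phase in the period of $q$, guessed carry bit) the automaton nondeterministically commits to a carry at each step, and the resulting relation is $O(1)$-valued because an $n$-bit prefix of $\beta$ confines $\beta$ to an interval of length $2^{-n}$, hence confines $\beta-q$ to an interval of the same length, which meets at most two dyadic $n$-cells. You should make this nondeterminism explicit rather than saying the carry ``never needs to be looked up in an unbounded region''. (A minor aside: your $p=0$ remark is phrased oddly---nothing is ``contradicted vacuously''; the theorem as stated simply fails for $p=0$, and both you and the paper tacitly exclude that case.)
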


\begin{proof} 
It is enough to show that a normal number remains normal when multiplied or divided by an integer (adding integers preserves normality for trivial reasons). Let $N$ be some integer factor. Fix some positions $m,m+1,\ldots,m+k-1$ in the binary expansion. Look at the digits of reals $\alpha$ and $N\alpha$ that occupy these positions. They form two $k$-bit blocks, one for $\alpha$ and one for $N\alpha$. Knowing the first one, we have $N$ possibilities for the second one (a school division algorithm keeps remainder modulo $N$), and vice versa (multiplication by $N$ also has this property). So if $\alpha$ is normal and frequencies of blocks in $\alpha$ are correct, in $N\alpha$ (or $\alpha/N$) each block appears at most $N$ times more often. It remains to apply Piatetski-Shapiro's theorem.
\end{proof}

\begin{remark}
Wall's theorem also can be derived from the characterization of normality in terms of automatic complexity (Theorem~\ref{th:compress}), since division and multiplication are automatic transformations.

See also below Theorem~\ref{th:doty} for a more general result of Doty, Lutz and Nandakumar saying that finite-state dimension does not change when a real number is multiplied by a rational factor.
\end{remark}

\subsection{A sufficient condition for normality in terms of complexity}\label{subsec:champernowne}

As we have mentioned, Champernowne~\cite{champernowne} proved that  the concatenation of the positional representations of all integers (in increasing order) is a normal sequence. (He considered decimal representations, not binary, but this does not make any difference.) 

This result is a special case of the following simple observation, a sufficient condition for normality in terms of Kolmogorov complexity. 

\begin{theorem}\label{th:normality-condition}
Let $x_1,x_2,x_3,\ldots$ be a sequence of non-empty binary strings. Let $L_n$ be a rational number that is the average length of $x_1,\ldots,x_n$, i.e., $L_n= (|x_1|+\ldots+|x_n|)/n$. Let $C_n$ be their average Kolmogorov complexity, i.e., $C_n=(\KS(x_1)+\ldots+\KS(x_n))/n$. Assume that $|x_n|=o(|x_1|+\ldots+|x_{n-1}|)$ and $L_n\to\infty$ as $n\to\infty$.

If $C_n/L_n \to 1$ as $n\to\infty$, then the concatenated sequence $\varkappa=x_1x_2x_3\ldots$ is normal.
\end{theorem}

The first two assumptions are technical (and usually are easy to check); they guarantee that $|x_n|$ grows not too slow and not too fast. In this case the normality of concatenation is guaranteed if the average \emph{complexity} of strings $x_i$ is close to their average \emph{length}. Note that $C_n$ is defined up to $O(1)$ additive term (the complexity function is defined with the same precision) and that $C_n\le L_n+O(1)$.
 
\begin{proof}
Using Theorem~\ref{th:compress}, we need to prove, for an arbitrary fixed automatic description mode $R$, a  lower bound $N-o(N)$ for the automatic complexity of the $N$-bit prefix of $x_1x_2x_3\ldots $. This prefix may end inside some $x_i$; we ignore the last incomplete block and consider maximal prefix of the form $x_1\ldots x_M$ of length at most $N$. Due to the superadditivity property (Theorem~\ref{th:complexity}, (k)) the automatic complexity of the $N$-bit prefix is at least $\KS_R(x_1)+\ldots+\KS_R(x_M)$ and is at least $\KS(x_1)+\ldots+\KS(x_M)-O(M)$, since Kolmogorov complexity is a lower bound for automatic complexity up to $O(1)$ additive term.

Due to the assumption $|x_n|=o(|x_1|+\ldots+|x_{n-1}|)$, the ignored incomplete part has length $o(N)$, so we may replace $N$ in the desired lower bound by $|x_1|+\ldots+|x_M|$. It remains to note that the ratio
\[
\frac{\KS_R(x_1)+\ldots+\KS_R(x_M)}{|x_1|+\ldots+|x_M|}\ge
\frac{\KS(x_1)+\ldots+\KS(x_M)-O(M)}{|x_1|+\ldots+|x_M|}=
\frac{C_M-O(1)}{L_M}
\]
converges to $1$ according to our assumptions ($L_M\to\infty$ and $C_M/L_M\to 1$). Here in the last step we divided the numerator and the denominator by $M$.
\end{proof}

We formulated Theorem~\ref{th:normality-condition} for the binary case, but both the statement and the proof can be easily adapted to an arbitrary base.

For the Champernowne example $x_i$ is the binary representation of $i$. The average length of $x_1,\ldots,x_n$, and even the maximal length,  is obviously bounded by $\log n + O(1)$. As for the complexity, it is enough to note that all $x_i$ are different, and the number of different strings of complexity at most $u$ is $O(2^u)$. Therefore, the fraction of strings that have complexity at most $\log n-d$ among all strings $x_1,\ldots,x_n$ is $O(2^{-d})$. The series $\sum d2^{-d}$ converges, so the average complexity of $x_1,\ldots,x_n$ is at least $\log n - O(1)$, and $C_n/L_n\ge (\log n -O(1))/(\log n+O(1))\to 1$. Other conditions of Theorem~\ref{th:normality-condition} are obviously true.

Champernowne's paper~\cite{champernowne} contains some other results: Theorem I says that the concatenation of all \emph{strings} in order of increasing lengths, i.e., the sequence 
\[
 0 \, 1 \, 00\, 01\, 10\, 11\, 000\,001\,010\,\ldots
\]
is normal. Theorem~II says that it remains normal if every string is repeated $\mu$ times for some integer constant~$\mu$. Theorem~III is the Champernowne's example we started with. Theorem~IV considers a sequence where $i$th string is repeated $i$ times. In all these examples the normality obviously follows from our Theorem~\ref{th:normality-condition}. (For Theorem~IV we need to note that a subset that has measure at most $2^{-d}$ according to the uniform distribution on $\{1,2,\ldots,n\}$ has measure $O(2^{-d})$ if we change the distribution and let the probability of $i$ be proportional to $i$.) 

\subsection{Copeland -- Erd\"os theorem}\label{subsec:copeland-erdos}

In addition to Theorems~I--IV (see the previous section) Champernowne~\cite{champernowne} gave some other examples of normal numbers (sequences), saying that they ``need for their establishment tedious lemmas and an involved notation, [and] no attempt at a proof will be advanced''. These examples are the sequences made of concatenated representations of  (a)~all composite numbers, (b)~numbers $\lfloor \alpha n \rfloor$ for some positive real $\alpha$, and (c)~$\lfloor n \log n \rfloor$. In all these cases the normality easily follows from Theorem~\ref{th:normality-condition}.

Champernowne also stated as a conjecture that the sequence made of decimal representation of \emph{prime} numbers is normal. This conjecture was proven by Copeland and Erd\"os~\cite{copeland-erdos} who gave a sufficient condition for the sequence $x_1x_2x_3\ldots$ obtained by concatenating the positional representations of integers $x_1,x_2,\ldots$ to be normal. Let us state the Copeland -- Erd\"os theorem and show that it is a direct consequence of Theorem~\ref{th:normality-condition}.

\begin{theorem}[Copeland -- Erd\"os]
Let $x_1,x_2,x_3,\ldots$ be a strictly increasing sequence of integers, and the number of terms $x_i$ that are less than $2^m$ is at least $2^{m(1-o(1))}$. The sequence $x_1 x_2 x_3\ldots$ \textup(the concatenation of the positional representations\textup) is normal.
\end{theorem}

\begin{proof}
The assumption implies that the length of $x_n$ is $(1+o(1))\log n$, and the conditions for the lengths are true for obvious reasons. The lower bound for complexity in the Champernowne example (Section~\ref{subsec:champernowne}) used only that all $x_i$ are different, and the assumption guarantees that $x_i$ form a strictly increasing sequence and therefore are different. So we may apply Theorem~\ref{th:normality-condition}.
\end{proof} 

\subsection{Besicovitch's theorem}\label{subsec:besicovitch}

Besicovitch~\cite{besicovitch} has proven that the number obtained by the concatenation of all perfect squares in the increasing order is normal. This result is also a consequence of Theorem~\ref{th:normality-condition}, but more detailed analysis is needed. We give a sketch of the corresponding argument. Applying Theorem~\ref{th:normality-condition}, we let $x_i$ be the binary representation of $i^2$ (we will say later what changes are needed for decimal representation). The length of $x_i$ is about $2\log i$, while the (typical) complexity is the complexity of $i$, so $C_n/L_n$ is close to $1/2$, not $1$. To deal with this problem, let us divide the string $x_i$ in two halves of equal length $x_i=y_iz_i$ and consider the most significant half and the least significant half separately. Of course, if $x_i$ has odd length, then the lengths of $y_i$ and $z_i$ differ by~$1$, and the reasoning should be adapted. We do not go into these details.

Instead of using $\KS(x_i)$ as a lower bound for $\KS_R(x_i)$, we note that $\KS_R(x_i)\ge \KS_R(y_i) +  \KS_R(z_i)$ and then use $\KS(y_i)$ and $\KS(z_i)$ as the lower bounds for both summands. In other words, we apply our criterion to a sequence of left and right halves of the binary string $x_i$. For $y_i$ we note that the most significant half of the binary representation of $i^2$ determines $i$ almost uniquely (there are $O(1)$ possible values of $i$ with the same most significant half). Indeed, assume that $i$ is a $k$-bit number. How much can we change $i$ not changing the most significant half of $i^2$? Changing $i$ by~$1$, we change $i^2$ by $2i+1$, and this change is of order $2^k$ since $i$ is a $k$-bit numbers. Only $O(1)$ changes of this type can be made without changing the most significant half of $i^2$ (i.e., the $k$ most significant bits out of $2k$). There is a caveat here: one should also take into consideration the possibility that two halves have different lengths (by~$1$), but this gives only $O(1)$ new candidates.

For the least significant half $z_i$ more complicated analysis is needed, since $z_i$ does not determine $i$ and sometimes many different values of $i$ share the same $z_i$. For example, if $i$ is a $k$-bit number whose $k/2$ least significant bits are zeros, then $i^2$ is a $2k$-bit number with $k$ trailing zeros, so we have about $2^{k/2}$ different values of $i$ that share the same $z_i$ (all zeros). This happens rarely, as the following lemma shows:

\begin{lemma}
 For each $k$, the average Kolmogorov complexity of $x^2 \bmod 2^k$ taken over all $x$ modulo $2^k$ is $k-O(1)$.
\end{lemma} 

\begin{proof}[Proof sketch]
As we mentioned, complexity $\KS(x^2)$ can be much less than $\KS(x)$ (if $x$ ends with $k/2$ zeros, the complexity of $x^2$ is zero while the complexity of $x$ could be $k/2$). However, such a difference is possible only if $x$ ends with many zeros. More precisely, we have $\KS(x^2\bmod 2^k)\ge \KS(x)-O(\zeta(x))$ for $k$-bit string $x$, where $\zeta(x)$ is the number of trailing zeros in $x$ (the maximal $u\le k$ such that $2^u$ divides $x$). This is enough, since the expected value of $\zeta(x)$ for random $x$ modulo $2^k$ is $O(1)$ (half of all numbers have at least one trailing zero, half of those have at least one additional trailing zero, etc., and the series converges).

To prove the bound let us rewrite it as $\KS(x)\le \KS(x^2)+O(\zeta(x))$. To specify $x$, it is enough to specify $x^2$ and the ordinal number of $x$ in the set of all residues with the same square. Therefore, it is enough to show that the number of residues $y$ modulo $2^k$ such that $x^2=y^2 \pmod {2^k}$ is bounded by $2^{O(z)}$ if $x$ has $z$ trailing zeros. Indeed, assume that $x$ has $z$ trailing zeros and $x^2=y^2$ for some other $y$ modulo $2^k$. Then $x^2-y^2=(x-y)(x+y)$ is a multiple of $2^k$, therefore $x-y$ is a multiple of $2^u$ and $x+y$ is a multiple of $2^v$ for some $u,v$ such that $u+v=k$. Then $2x$ is a multiple of $2^{\min(u,v)}$, so $\min(u,v)\le z+1$. Then $\max(u,v)\ge k - z + O(1)$ (recall that $\min(u,v)+\max(u,v)=u+v$), so one of $x-y$ and $x+y$ is a multiple of $2^{k-z+O(1)}$, and each case contributes at most $2^{z+O(1)}=O(2^z)$ solutions for the equation $x^2=y^2 \pmod {2^k}$.
\end{proof}

\begin{remark}
The statement of the lemma involves Kolmogorov complexity, but it would be enough to get a lower bound for the Shannon entropy of $x^2 \bmod 2^k$ where $x$ is a uniformly distributed random integer modulo $2^k$. Indeed, assume that we have such a lower bound. Then we can derive the lower bound for the average Kolmogorov complexity of the squares of $2^k$ first integers, if we use the prefix version of the complexity (see, e.g.,~\cite{shen-intro} for the definition and properties of prefix complexity). Indeed, in this case the optimal prefix-free description of these $2^k$ strings form a prefix-free code, and the average length of a prefix-free code for a random variable is at least the entropy of this variable. One can also note that a upper bound for Shannon entropy can be converted for the lower bound for the Kolmogorov complexity, since the sequence is computable and the optimal prefix-free code for it is also computable, so the connection works in both directions, and we may use the entropy language if we want to.
\end{remark}

See also Section~\ref{subsec:stateless} for the dimension version of Theorem~\ref{th:normality-condition} (Theorem~\ref{th:lower-bound-dimension}).

\begin{question}
Is it possible to generalize these arguments and prove the  Davenport -- Erd\"os result (replace the squaring in Besicovitch's theorem by a polynomial of higher degree)? One possible approach would be to estimate the entropies of the random variables obtained as follows: fix some $t$, take a random integer $x\in \{0,\ldots,2^k-1\}$, compute $P(x)$, and let $\xi_t$ be the bit string that consists of $k$ consecutive bits in the binary representation of $P(x)$, starting from position $t$.
\end{question}

\subsection{Calude -- Salomaa -- Roblot question answered by Calude -- Staiger -- Stephan}\label{subsec:csr}

In this section we use our tools to give a simple answer to a question posed by Calude, Salomaa and Roblot~\cite[Section 6]{csr} and answered in~\cite{css} by a more complicated argument. In~\cite{csr} the authors define a version of automatic complexity in the following way. A deterministic transducer (finite automaton that reads an input string and at each step produces some number of output bits) maps a description string to a string to be described, and the complexity of $y$ is measured as the minimal sum of the sizes of the transducer and the input string needed to produce $y$; the minimum is taken over all pairs (transducer, input string) producing $y$. The size of the transducer is measured via some encoding, so the complexity function depends on the choice of this encoding. ``It will be interesting to check whether finite-state random strings are Borel normal''~\cite[p.~5677]{csr}. Since normality is defined for infinite sequences, one probably should interpret this question in the following way: is it true that normal infinite sequences can be characterized as sequences whose prefixes have finite-state complexity close to length?

This question got a negative answer in~\cite{css}. Here we show that the our tools can be used to provide a simple proof of this negative answer. More precisely, in one direction this approach works, but in the other direction it fails. To avoid confusion between different versions of automatic complexity, we denote the complexity defined in~\cite{csr} by $\CSR(x)$. It depends on the choice of the encoding of transducers, but the claim is true for every encoding, so we assume that some encoding is fixed and omit it in the notation.

\begin{theorem}[\cite{css}]\label{th:csr-answer}
\leavevmode\par
\textup{\textbf{(a)}} If a binary sequence $\alpha=a_0a_1\ldots$ is not normal, then there exist some $c<1$ such that the $\CSR(a_0\ldots a_{n-1}) < cn$ for infinitely many $n$.

\textup{\textbf{(b)}} There exists a normal binary sequence $\beta=b_0b_1\ldots$ such that \[\liminf \CSR(b_0\ldots b_{n-1})/n=0.\]

\end{theorem}

\begin{proof}
To prove the first statement, we repeat the argument used to prove the first part of Theorem~\ref{th:compress}. Indeed, the block code constructed in that argument can be decoded by a transducer. This transducer had some description of fixed length, and then we add the length of the encoded string. For long prefixes the transducer part does not matter, since the transducer is fixed and the length of the prefix goes to infinity.

For the second part we construct an example of a normal sequence using the Champernowne's idea and Theorem~\ref{th:normality-condition}. The sequence will have the form
\[
\beta=(B_1)^{n_1}(B_2)^{n_2}\ldots
\]
Here $B_i$ is the concatenation of all strings of length $i$ (say, in lexicographical ordering, but this does not matter), and $n_i$ is a fast growing sequence of integers.

To choose $n_i$, let us note first that for a periodic sequence (of the form $XY^\infty$) the $\CSR$-complexity of its prefixes of the form $XY^k$ is $o(\text{length of }XY^k)$. Indeed, we may consider a transducer that first outputs $X$, then outputs $Y$ for each input bit $1$. So $\CSR(XY^m)=m+O(1)$, and the compression ratio is about $1/|Y|$. To get an $o(\text{length})$-bound, we use $Y^c$ for some constant $c$ as a period to improve the compression.

Now consider the complexity/length ratio for the prefixes of $\beta$ if the sequence $n_i$ grows fast enough. Assume that $n_1,n_2,\ldots,n_k$ are already chosen and we now choose the value of $n_{k+1}$. We may use the bound explained in the previous paragraph and let $X=(B_1)^{n_1}\ldots (B_k)^{n_k}$ and $Y=B_{k+1}$. For large enough $n_{k+1}$ we get arbitrarily small complexity/length ratio. (Note that good compression is guaranteed only for some prefixes; when increasing $k$, we need to switch to another transducer, and we know nothing about the length of its encoding. This corresponds to $\liminf$ in our statement.)

It remains to apply Theorem~\ref{th:normality-condition} to show that for fast growing sequence $n_1,n_2,\ldots$ the sequence $\beta$ is normal. We apply the criterion by splitting $B_k$ into pieces of length $k$ (so all strings of length $k$ appear once in this decomposition of $B_k$). We already know that the average Kolmogorov complexity of the pieces in $B_k$ is $k-O(1)$ (and the length of all pieces is $k$). This is enough to satisfy the conditions from Theorem~\ref{th:normality-condition} \emph{if $x_1\ldots x_n$ ends on the boundary of the block $B_k$}. But in general we need also to consider the last incomplete group of blocks that form a prefix of some $B_k$. The total length of these blocks is bounded by $|B_k|$, i.e., by $k2^k$. We need this group to be short compared to the rest, and this will be guaranteed if $n_{k-1}$ (the lower bound for the length of the previous part) is much bigger than $k2^k$. And we assume that $n_k$ grow very fast, so this condition is easy to satisfy. Theorem~\ref{th:csr-answer} is proven.
\end{proof}

\section{Finite-state dimension and automatic complexity}\label{sec:fsd}

If a sequence is not normal, we may ask how far it is from being normal. This is measured by the notion of finite-state dimension introduced in~\cite{dai-fsd}. This is a ``finite-state version'' of the notion of effective dimension (as we have discussed in the Introduction). The finite-state dimension of a binary sequence is a number between $0$ and $1$; it is an upper bound for the effective Hausdorff dimension. Finite-state dimension equals $1$ for normal sequences. In this section we extend some results of Sections~\ref{sec:normal-incompressible} and~\ref{sec:applications} proven for normal sequences to the case of arbitrary finite-state dimension, and discuss the connections between the finite-state dimension and the effective Hausdorff dimension.

 We start (Section~\ref{subsec:entropy-rates}) by defining the finite-state dimension in terms of entropy rates for aligned blocks, following Bourke, Hitchcock and Vinodchandran~\cite{bourke} who proved that this definition is equivalent to the original one  given in terms of finite-state gales~\cite{dai-fsd}. We prove that one may as well use the non-aligned blocks in this definition. In the next section (Section~\ref{subsec:large-blocks}) we show that the equivalence between aligned and non-aligned blocks in the definition of finite-state dimension requires a change in the block size. Then (Section~\ref{subsec:fsd-wall}) we give a simplified proof of the result of Doty, Lutz and Nandakumar~\cite{doty} saying that the finite-state dimension of a real number remains the same when the number is multiplied by a rational number (a dimension version of Wall's theorem), improving the bound for entropies of $k$-bit blocks, and give a simple example showing that this bound is tight. Then we prove the characterization of finite-state dimension in terms of automatic complexity (Section~\ref{subsec:fsd-ac}, Theorem~\ref{th:fsd-automatic}). Moreover, as a byproduct we get (Section~\ref{subsec:stateless}) a ``stateless'' characterization of finite-state dimension that does not mention at all finite-state automata or Shannon entropy and uses superadditive upper bounds for Kolmogorov complexity (Theorem~\ref{th:stateless}).  We give also another stateless characterization that uses some ``calibration'' condition instead of Kolmogorov complexity (Theorem~\ref{th:stateless1}).  Then we recall the original definition of the finite-state dimension in terms of finite-state $s$-gales (Section~\ref{subsec:fsm}) and use the tools from algorithmic information theory (a finite-state version of a priori probability) to give a simple proof of equivalence between this definition and the others. In Section~\ref{subsec:agafonov} we use martingales to provide simple proofs for the results of Agafonov (Theorem~\ref{th:agafonov}) and Schnorr -- Stimm (Theorem~\ref{th:s-and-s}). Finally, in Section~\ref{subsec:multi-account} we note that some more general notion of a finite-state measure can also be used to characterize finite-state dimension and normality.

\subsection{Entropy rates for aligned and non-aligned blocks}\label{subsec:entropy-rates}

Consider a sequence $\alpha=a_0a_1a_2\ldots$, and some positive integer $k$. As in the definition of normality, we cut the sequence $\alpha$ into $k$-bit blocks (aligned version), or consider all $k$-bit substrings of $\alpha$ (non-aligned version). Then we consider limit frequencies of these blocks. In this way we get some distribution on the set $\mathbb{B}^k$ of all $k$-bit blocks. We want to define the finite-state dimension of $\alpha$ as the limit of the Shannon entropy of this distribution per bit, i.e., divided by $k$, as $k$ goes to infinity.

The problem is that limit frequencies may not exist, so we should be more careful. For every $N$  take the first $N$ blocks of length $k$ and choose one of them uniformly at random.  In this way we obtain a random variable taking values in $\mathbb{B}^k$. Consider the Shannon entropy of this random variable.  This can be done in aligned (a) and non-aligned (na) settings, so we get two quantities:
$$H^\mathrm{a}_{k,N}(\alpha) = H(\alpha_{kI} \ldots \alpha_{kI + k - 1}), \qquad H^\mathrm{na}_{k,N}(\alpha) = H(\alpha_{I}\ldots \alpha_{I + k - 1}),$$
where $I\in\{0, \ldots, N - 1\}$ (the block number) is chosen uniformly at random, and $H$ denotes the Shannon entropy of the corresponding random variable.

Then we apply $\liminf_N$ as $N\to\infty$ and define
\[
H^\mathrm{a}_k(\alpha)=\liminf_{N\to \infty} H^\mathrm{a}_{k,N}(\alpha), 
 \qquad
H^\mathrm{na}_k(\alpha)=\liminf_{N\to \infty} H^\mathrm{na}_{k,N}(\alpha).
\]
The following result says that both quantities $H^\mathrm{a}_k(\alpha)$ and $H^\mathrm{na}_k(\alpha)$, divided by the block length  $k$, converge to the same value as $k\to\infty$, and this value can also be defined as $\inf_{k}H_k(\alpha)/k$ (both in aligned and non-aligned versions).

\begin{theorem}\label{th:fsd}
For every bit sequence $\alpha$ we have
\[
\lim_{k}\frac{H^\mathrm{a}_k(\alpha)}{k}=
\inf_{k}\frac{H^\mathrm{a}_k(\alpha)}{k}=
\lim_{k}\frac{H^\mathrm{na}_k(\alpha)}{k}=
\inf_{k}\frac{H^\mathrm{na}_k(\alpha)}{k}.
\]
The sequence $\alpha$ is normal if and only if this common value equals $1$.
\end{theorem}

\begin{definition}
This common value of these four quantities is called the \emph{finite-state dimension of~$\alpha$} and is denoted by $\dimFS(\alpha)$.
\end{definition}

The original definition of the finite-state dimension~\cite{dai-fsd} was different (see Section~\ref{subsec:fsm} below), and the equivalence between it and the aligned version of the definition given above was shown in~\cite{bourke}. See also Theorem~\ref{th:fsg} (part 2) below. The  equivalence between non-aligned and aligned versions seems to be new.

\begin{proof}
There are two ways to prove Theorem~\ref{th:fsd}. Here we give a proof that uses basic tools from information theory such as Shearer-type inequalities. One can also prove this result using the characterization of finite-state dimension in terms of automatic complexity. We sketch this proof later, see the remark at the end of Section~\ref{subsec:fsd-ac} (p.~\pageref{rem:alternative}).

The technical part of the proof consists of two lemmas:

\begin{lemma}
\label{lem:fsd1}
For every $\alpha$, every $k$, every $K\ge k$ \textup:
\[
\frac{H^\mathrm{na}_K(\alpha)}{K} \le  \frac{H^\mathrm{a}_k(\alpha)}{k} + O\left(\frac{k}{K}\right).
\]
\end{lemma}
\begin{lemma}
\label{lem:fsd2}
For every $\alpha$, every $k$, every $K\ge k$ \textup:
\[
\frac{H^\mathrm{a}_K(\alpha)}{K} \le  \frac{H^\mathrm{na}_k(\alpha)}{k} + O\left(\frac{k}{K}\right).
\]
\end{lemma}

Let us show how these two lemmas imply Theorem \ref{th:fsd}. Take the $\limsup$ of the both sides of these inequalities as $K\to\infty$:
\[\limsup_{K\to\infty}  \frac{H^\mathrm{na}_K(\alpha)}{K} \le \frac{H^\mathrm{a}_k(\alpha)}{k}, \qquad \limsup_{K\to\infty}  \frac{H^\mathrm{a}_K(\alpha)}{K} \le \frac{H^\mathrm{na}_k(\alpha)}{k}.\]
Since this holds for all $k$,
\[
\limsup\limits_{K\to\infty}  \frac{H^\mathrm{na}_K(\alpha)}{K} \le \inf\limits_k\frac{H^\mathrm{a}_k(\alpha)}{k}, \qquad \limsup\limits_{K\to\infty}  \frac{H^\mathrm{a}_K(\alpha)}{K} \le \inf\limits_k\frac{H^\mathrm{na}_k(\alpha)}{k}.
\]
Obviously we also have:
\[
\inf\limits_k\frac{H^\mathrm{a}_k(\alpha)}{k} \le \limsup\limits_{K\to\infty}  \frac{H^\mathrm{a}_K(\alpha)}{K},\qquad\inf\limits_k\frac{H^\mathrm{na}_k(\alpha)}{k} \le \limsup\limits_{K\to\infty}  \frac{H^\mathrm{na}_K(\alpha)}{K},
\]
so all four quantities coincide and are equal to both $\lim_{k}\dfrac{H^\mathrm{na}_k(\alpha)}{k}$ and $\lim_{k}\dfrac{H^\mathrm{a}_k(\alpha)}{k}$.

It remains to prove Lemmas \ref{lem:fsd1} and \ref{lem:fsd2}.

\begin{proof} [Proof of Lemma \ref{lem:fsd1}]
Fix a sequence $\alpha=a_0a_1a_2\ldots$, and consider some integer $N$. Take $I \in\{0, \ldots, N - 1\}$ uniformly at random and consider a random variable
$$\xi = a_I \ldots a_{I + K - 1}$$
whose values are $K$-bit strings. In other words, this random variable is a randomly selected non-aligned block among the first $N$ ones. By definition, the entropy of $\xi$ is $H^\mathrm{na}_{K,N}(\alpha)$. Let us look at the aligned $k$-bit blocks covered by the block $\xi$ (i.e., the aligned $k$-bit blocks inside $I\ldots I+K-1$). The exact number of these blocks may vary depending on $I$, but there are at least $m=\lfloor K/k\rfloor -1$ of them (if there were only $m-1$ complete blocks, plus maybe two incomplete blocks, then the total length would be at most $k(m-1)+2k-2=km+k-2$, but we have $K/k\ge m+1$, i.e., $K\ge km+k$).  We number $m$ first covered aligned blocks from left to right and get $m$ random variables $\xi_1,\ldots,\xi_m$ (defined at the same space $\{0,\ldots,N-1\}$). For example, $\xi_1$ is the leftmost aligned $k$-bit block of $\alpha$ in the interval $I\ldots I+K-1$. To reconstruct the value of $\xi$ when all $\xi_i$ are known, we need to specify the prefix and suffix of $\xi$ that are not covered by $\xi_i$ (including their lengths). This requires $O(k)$ bits of information, so
\[
H^\mathrm{na}_{K,N}(\alpha)  = H(\xi)\le H(\xi_1)+\ldots+H(\xi_m)+O(k).
\]
We will show that for each $s\in\{1,\ldots,m\}$ the distribution of the random variable $\xi_s$ is close to the uniform distribution over the first $\lfloor N/k\rfloor$ aligned $k$-bit blocks of $\alpha$. The standard way to measure how close are two distributions on the same set $X$ is to measure the \emph{statistical distance} between them, defined as 
\[
\delta(P, Q) = \frac{1}{2}\sum\limits_{x\in X} \bigl|P(x) - Q(x)\bigr|.
\]
We claim that (for each $s\in \{1,2,\ldots,m\}$) the statistical distance between the distribution of $\xi_s$ and the uniform distribution on the first $\lfloor N/k\rfloor$ aligned blocks converges to $0$ as $N\to \infty$. First, let us note that for a fixed aligned block its probability to become $s$-th aligned block inside a random nonaligned block is exactly $k/N$ (there are $k$ possible positions for a random non-aligned block when this happens). The only exception to this rule are aligned blocks that are near the endpoints, and we have at most $O(K/k)$ of them. When we choose a random aligned block, the probability to choose some position is exactly $1/\lfloor N/k\rfloor$, so we get some difference due to rounding. It is easy to see that the impact of both factors on the statistical distance converges to $0$ as $N\to\infty$. Indeed, the number of the boundary blocks is $O(K/k)$, and the bound does not depend on $N$, while the probability of each block (in both distributions) converges to zero.\footnote{More precisely, we should speak not about the probability of a given block, since the same $k$-bit block may appear in several positions, but about the probability of its appearance in a given position. Formally speaking, we use the following obvious fact: if we apply some function to two random variables, the statistical difference between them may only decrease. Here the function forgets the position of a block.} Also, since $m=N/k$ and $m'=\lfloor N/k\rfloor$ differ at most by $1$, the difference between $1/m$ and $1/m'$ is of order $1/m^2$, and converges to $0$ even if multiplied by $m$ (the number of blocks is about $m$).

Now we use the continuity (more precisely, the uniform continuity) of the entropy function and note that all $m=\lfloor N/k\rfloor -1$ random variables in the right hand side are close to the uniform distribution on first $\lfloor N/k\rfloor$ aligned blocks (the statistical distance converges to $0$), so  
\[
\liminf_{N\to\infty}H^\mathrm{na}_{K,N}(\alpha) \le (\lfloor K/k\rfloor -1) \liminf_{N\to\infty} H^{\mathrm{a}}_{k, \lfloor N/k\rfloor}(\alpha) +O(k),
\]
and dividing by $K$ we get the statement of Lemma~\ref{lem:fsd1}.
\end{proof}

\begin{proof}[Proof of Lemma \ref{lem:fsd2}]
We need an upper bound for $H^\mathrm{a}_{K,N}(\alpha)$, i.e.,  for $H(a_{KI} \ldots a_{KI + K - 1})$ where $I$ is uniformly distributed in $\{0, 1\ldots, N - 1\}$. For that we use Shearer's inequality (see, e.g.,~\cite[Section 7.2 and Chapter 10]{kolmbook}). In general, this inequality can be formulated as follows.  Consider a finite family of arbitrary random variables $\eta_0,\ldots,\eta_{m-1}$ indexed by integers in $\{0,\ldots,m-1\}$. For every $U\subset\{0,\ldots,m-1\}$ consider the tuple $\eta_U$ of all $\eta_u$ where $u\in U$. If a family of subsets $U_0,\ldots,U_{s-1}\subset \{0,\ldots,m-1\}$ covers each element of $U$ at least $r$ times, then
\[
H(\eta_U)\le \tfrac{1}{r}\left( H(\eta_{U_0})+\ldots+H(\eta_{U_{s-1}})\right).
\]
In our case we have $K$ variables $\eta_0,\ldots \eta_{K-1}$ that are individual bits in a random aligned $K$-bit block $a_{KI} \ldots a_{KI + K - 1}$ (for random $I$), i.e. $\eta_0=a_{KI}$, $\eta_1=a_{KI+1}$, etc. The set $U$ contains all indices $0,\ldots, K-1$, and the sets $U_i$ contains $k$ indices $i,i+1,\ldots ,i+k-1$ (where operations are performed modulo $K$, so there are $U_i$ that combine the prefix and suffix of a random $K$-bit block). Each $\eta_i$ is covered $k$ times due to this cyclic arrangement. In other words, the variable $\eta_{U_i}$ is a substring of the random string  $\eta_U=a_{KI} \ldots a_{KI + K - 1}$ that starts from $i$th position and wraps around if there is not enough bits. There are $k - 1$ tuples of this ``wrap-around'' type (block of length $k$ may cross the boundary in $k-1$ ways). These tuples are not convenient for our analysis, so we just bound their entropy by $k$. In this way  we obtain the following upper bound:
\[
H^\mathrm{a}_{K,N}(\alpha) =
H(a_{KI} \ldots a_{KI + K - 1}) \le\frac{1}{k}\left(\sum_{s = 0}^{K - k} H(a_{KI + s}\ldots a_{KI + s + k - 1}) + (k - 1) k \right).
\]
Adding $k-1$ terms (adding some other entropies that replace the ``wrap-around terms''), we increase the right hand side:
\[
H^\mathrm{a}_{K,N}(\alpha) \le \frac{1}{k}\left(\sum_{s = 0}^{K - 1} H(a_{KI + s}\ldots a_{KI + s + k - 1}) + (k - 1) k \right).
\]
Let us look at the variable $a_{KI + s}\ldots a_{KI + s + k - 1}$ in the right hand side for some fixed~$s$. It has the same distribution as the random non-aligned $k$-bit block $a_J\ldots a_{J+k-1}$ for uniformly chosen $J$ in $\{0,\ldots,NK-1\}$ conditional on the event ``$J \bmod K=s$'':
\[
H(a_{KI + s}\ldots a_{KI + s + k - 1}) = H(a_{J}\ldots a_{J + k - 1}\cnd J\bmod K = s).
\]
The average of these $K$ entropies (for $s=0,\ldots,K-1$) is the conditional entropy
\[
H(a_{J}\ldots a_{J + k - 1}\cnd J\bmod K)
\]
 that does not exceed the unconditional entropy. So we get
\[
H^\mathrm{a}_{K,N}(\alpha) \le \frac{1}{k} \left(K\cdot H^\mathrm{na}_{k,KN}(\alpha) + (k - 1) k \right).
\]
By taking $\liminf$ as $N\to\infty$ we obtain:
$$\frac{H^\mathrm{a}_{K}(\alpha)}{K} = \liminf\limits_{N\to\infty} \frac{H^\mathrm{a}_{K,N}(\alpha)}{K} \le \liminf\limits_{N\to\infty} \frac{H^\mathrm{na}_{k,KN}(\alpha)}{k} + O\left(\frac{k}{K}\right) .$$
However, $\liminf$ in the right hand side is taken over multiples of $K$ and we want it to be over all indices. Formally, it remains to show that
$$ \liminf\limits_{N\to\infty} \frac{H^\mathrm{na}_{k,KN}(\alpha)}{k} =  \liminf\limits_{N\to\infty} \frac{H^\mathrm{na}_{k,N}(\alpha)}{k}$$
as the latter is by definition equal to $H^\mathrm{na}_{k}(\alpha)/k$. Indeed, the statistical distance between the uniform distribution on the first $KN$ (non-aligned) blocks and the uniform distribution on the first $KN+r$ blocks (where $r$ the remainder modulo $K$) tends to zero since the first distribution is the second one conditioned on the event whose probability converges to $1$ (i.e., the event ``the randomly chosen block is not among the $r$ last ones'' whose probability is $KN/(KN+r)$).
\end{proof}

Theorem~\ref{th:fsd} is proven.
\end{proof}

\subsection{Why do we need large blocks: a counterexample}
\label{subsec:large-blocks}

In the previous section we have shown that the aligned finite-state dimension is equal to the non-aligned  one. However, this argument uses different block sizes: we show that if $H^\mathrm{a}_k(\alpha)/k$ is small, then $H^\mathrm{na}_K(\alpha)/K$ is small for much larger $K$, and vice versa. This change in the block size is unavoidable, as the following example shows (when $k=2$, no fixed $K$ is enough):

\begin{theorem}\label{thm:large-blocks}
\leavevmode
\begin{description}
\item[\textup{(a)}]For all  $k$ there exists an infinite sequence $\alpha$ such that  $H^\mathrm{na}_{2}(\alpha) < 2$ and $H^\mathrm{a}_{m}(\alpha) = m$ for all $m\le k$.
\item[\textup{(b)}]For all  $k$ there exists an infinite sequence $\alpha$ such that $H^\mathrm{a}_{2}(\alpha) < 2$ and $H^\mathrm{na}_{m}(\alpha) = m$ for all $m\le k$.
\end{description}
\end{theorem}

\begin{proof}
(a)~Consider all $k$-bit strings. It is easy to arrange them in some order $B_0,\ldots$ such that the last bit of $B_i$ is the same as the first bit of $B_{i+1}$, for all $i$, and the last bit of the last block is the same as the first bit of the first block. For example, consider (for every $x\in\{0, 1\}^{k - 2}$) four $k$-bit strings $0x0, 0x1, 1x1, 1x0$ and concatenate these $2^{k - 2}$ quadruples in arbitrary order.

Then consider a periodic sequence $\alpha$ with period $B_0B_1\ldots B_{2^k-1}$. Obviously all aligned $k$-bit blocks appear with the same frequency in $\alpha$, so $H^\mathrm{a}_{k}(\alpha) = k$. However, for non-aligned bit blocks of length $2$ we have two cases: this pair can be either completely inside some $B_i$, or be on the boundary between blocks. The pairs of the first type are balanced (since we have all possible $k$-bit blocks), but the boundary pairs could be only $00$ or $11$ due to our construction. So the non-aligned frequency of these two blocks is $1/4+\Omega(1/k)$, and for two other blocks we have $1/4-\Omega(1/k)$, so $H^\mathrm{na}_{2}(\alpha)  < 2$.

The only problem is that in this construction we do not necessarily have that $H^\mathrm{a}_m(\alpha) = m$ for all $m < k$, only for $m=k$. But this is easy to fix. Note that $H^\mathrm{a}_k(\alpha) = k$ implies $H^\mathrm{a}_m(\alpha) = m$ if $m$ is a divisor of $k$. So we can just use the same construction with blocks of length $k!$ instead of $k$.

(b)~Now let us consider a sequence constructed in the same way, but let blocks $B_0,B_1,\ldots,B_{2^k-1}$ go in the lexicographical ordering. First let us note that all $k$-bit blocks have the same \emph{non-aligned} frequencies in the periodic sequence with period $B_0B_1\ldots B_{2^k-1}$. (For aligned $k$-blocks it was obvious, but the non-aligned case needs some proof.) Indeed, consider some $k$-bit string $U$; we need to show that it appears exactly $k$ times in the (looped) sequence $B_0B_1\ldots B_{2^k-1}$. In fact, it appears exactly once for each position modulo~$k$.\footnote{So the cyclic sequence $B_0B_1\ldots B_{2^k-1}$ forms a \emph{perfect necklace} in the sense of Alvarez, Becher, Ferrari and Yuhjtman~\cite{alvarez-perfect} who note that such a sequence can be constructed in the same way as de Bruijn sequences (as an Eulerian path in some graph). Unfortunately, for our purposes we need also to guarantee that aligned $2$-bit blocks do not have the same frequencies, and for that we use our specific perfect necklace.} For example, it appears once among the blocks $B_i$. Why the same is true for some other position $s$: the $k-s$ first bits of $U$ appear as a suffix of $B_{i - 1}$ and  the last $s$ bits of $U$ appear as a prefix of $B_{i}$? Note that $(k - s)$-bit suffixes of $B_0, B_1, B_2,\ldots$ form a cycle modulo $2^{k-s}$, so the first $k - s$ bits of $U$ uniquely determine the \emph{last} $k - s$ bits of $B_{i}$, whereas the first $s$ bits of $B_i$ are just written in the $s$-bit suffix~of~$U$.

 This implies that non-aligned frequencies for all $k$-bit blocks are the same. Therefore, they are the same also for $m$-bit blocks for all $m\le k$. This implies also that we may assume for the rest of the proof that $k$ is odd.

Now let us consider \emph{aligned} blocks of size $2$. We will show that aligned frequency of the block $10$ in the sequence $B_0B_1\ldots B_{2^k-1}$  is  $1/4 - \Omega(1/k)$. Since $k$ is odd (see above), when we cut our sequence into blocks of size $2$, there are ``border'' blocks that cross the boundaries between $B_i$ and $B_{i+1}$, and other non-border blocks. Each second boundary is crossed (between $B_0$ and $B_1$, then $B_2$ and $B_3$, and so on), so the border blocks \emph{all have the first bit $0$}. In particular, $10$ never appears on such positions. This create a imbalance of order $1/k$ for block $10$, and we should check that it is not compensated by non-boundary blocks. In the blocks $B_i$ with even $i$ we delete that last bit and cut the rest into bit pairs. After deleting the last bit we have all possible $(k-1)$-bit strings, so no imbalance arises here. In the blocks $B_i$ with odd $i$ we delete the first bit, and then cut the rest into bit pairs. In the last pair the last bit is $1$ (since $i$ is odd), so once again we never have $10$ here, as required (the other positions are balanced).
\end{proof}

\subsection{Finite-state dimension and Wall's theorem}
\label{subsec:fsd-wall}

Using the notion of finite-state dimension, one can generalize Wall's theorem, as noted by Doty, Lutz and Nandakumar~\cite{doty}]

\begin{theorem}[Doty, Lutz, Nandakumar]\label{th:doty}
The finite-state dimension of a real number does not change when the number is multiplied by a rational number or when a rational number is added.
\end{theorem}

\begin{proof}
To prove this result, Doty, Lutz and Nandakumar show that for every $k$ the block entropy rates for $k$-bit blocks in a binary representation of a real number do not change significantly when a real number is multiplied by an integer. This obviously implies the same for the division by an integer, and adding integers is trivial, so the finite-state dimension does not change when we multiply by rational numbers or add rational numbers. More precisely, they show that
\[
\left|H^\mathrm{a}_{k}(\alpha) - H^\mathrm{a}_{k}(M\cdot \alpha) \right| \le \log_2(M^2(s + 1))
\]
for every real $\alpha$ and every positive integer $M$, where $s$ is the number of ones in the binary expansion of $M$. This inequality implies that finite-state dimensions of $\alpha$ and $M\alpha$ are the same, since the bound does not depend on $k$ and, being divided by $k$, converges to $0$ as $k\to\infty$.  In fact, a much simpler argument provides a better bound:

\begin{lemma}\label{lem:doty-bound-simplified}
For any real $\alpha$ and any positive integer $M$:
$$\left|H^\mathrm{a}_{k}(\alpha) - H^\mathrm{a}_{k}(M\cdot \alpha) \right| \le \log_2M\ \text{ and }\  \left|H^\mathrm{na}_{k}(\alpha) - H^\mathrm{na}_{k}(M\cdot \alpha) \right| \le \log_2M. $$
\end{lemma}
\begin{proof}
Both inequalities (aligned and non-aligned versions) are proven in a similar way. Consider, for instance, the aligned case. Choose $i\in\{0, \ldots, N-1\}$ uniformly at random and let $X$ be the $i$th aligned $k$-bit block in $\alpha$. Define a random variable $Y$ for $M\cdot \alpha$ in a similar way.

As we have noted while proving Theorem~\ref{th:wall}, for each group of neighbor positions $(i,i+1,\ldots,i+k-1)$, the bits of $\alpha$ in these positions determine almost uniquely the bits of $M\alpha$ in the same positions, and vice versa. Here ``almost uniquely'' means that there are at most $M$ possibilities. Therefore
\[
H(X\cnd Y) \le \log_2M \ \text{ and }\ H(Y\cnd X) \le \log_2M.
\]
Since $H(X) \le H(X, Y)  = H(Y) + H(X\cnd Y)$ and $H(Y) \le H(X, Y) = H(X) + H(Y\cnd X)$, we have
$$|H(X) - H(Y)| \le \log_2 M.$$
\end{proof}

As we have said, Lemma~\ref{lem:doty-bound-simplified} immediately implies Theorem~\ref{th:doty}.
\end{proof}

The bounds provided by Lemma~\ref{lem:doty-bound-simplified} are sharp, as the following example shows. Note that
\[
1/3 = 0.(01), \qquad 1/9 = 0.(000111),
\]
(parentheses show the period of a periodic fraction), 
$H^\mathrm{na}_{6}(1/3) = \log_2 2, H^\mathrm{na}_{6}(1/9) = \log_2 6$, and
$H^\mathrm{a}_{2}(1/3) = \log_2  1,  H^\mathrm{a}_{2}(1/9) = \log_2 3.$

\subsection{Finite-state dimension and automatic complexity}\label{subsec:fsd-ac}

The characterization of normal sequences in terms of automatic Kolmogorov complexity can be extended to the case of arbitrary finite-state dimension.

\begin{theorem}\label{th:fsd-automatic}
Finite-state dimension of an arbitrary bit sequence $\alpha=a_0a_1a_2\ldots$ is equal to
\[
\inf_R \,\liminf_{n\to\infty} \frac{\KS_R(a_0a_1\ldots a_{n-1})}{n}
\]
\end{theorem}

Note that replacing $\KS_R$ by the standard Kolmogorov complexity, we get the definition of the effective Hausdorff dimension, and $\inf_R$ is no more needed, since there exists an optimal description mode.

\begin{proof}
This result is a generalization of Theorem~\ref{th:compress} and the proof follows the same scheme. We need to prove two inequalities. In one direction we assume that the finite-state dimension of $\alpha$ is small: $\dimFS(\alpha)$ is less than some $\tau$. Then we need to construct an automatic description mode $R$ such that
\(
\liminf_{n\to\infty}{\KS_R(a_0a_1\ldots a_{n-1})}/{n} < \tau.
\)
The basic idea: if for some $k$ the distribution on aligned $k$-blocks has small entropy, then the corresponding Shannon -- Fano code has small average coding length and therefore provides good compression ratio when used as a description mode. However, there are two problems with this plan. First, for different prefixes the distributions on $k$-blocks, while having small entropy, could be different. Still we need to construct \emph{one} description mode that provides good compression for infinitely many prefixes. Second, the Shannon -- Fano code does not reach the exact value of entropy, the average length may exceed entropy (though not much, at most by $1$).

We have already seen in the proof of Theorem~\ref{th:compress} how to deal with both problems. For the first one, we consider the sequence of distributions with small entropies, use compactness to choose a convergent subsequence, construct the Shannon -- Fano code using the limit distribution, and modify it to cover blocks with zero probabilities. For the second, we note that the overhead $1$ (or $2$ due to the modifications of the code) is divided by the length of the block, so we may make the difference per bit arbitrarily small by considering large blocks. In the proof of Theorem~\ref{th:compress} we doubled the length of the block for this. Now we may do the same or use a similar argument implicitly by using Theorem~\ref{th:fsd} that allows us to start with blocks of arbitrarily large length.

Proving the inequality in the other direction, we assume that  
\[
\dimFS(\alpha)=\lim_k (H_k^\mathrm{a}(\alpha)/k)
\]
is high. This means that $H_k^\mathrm{a}(\alpha)/k$ is high for all sufficiently large~$k$. In fact, it is high for all $k$, since $\dimFS$ can be defined as the infimum of the same sequence, but this is not important for us now. We fix some automatic description mode $R$. We have to prove the lower bound for the automatic complexity $\KS_R$ for all (long enough) prefixes of $\alpha$. For that we cut a prefix into aligned blocks of large size $k$. We use superadditivity of $\KS_R$ and note that the $\KS_R$-complexity of the entire prefix is at least the sum of the $\KS_R$-complexities of the blocks. 

The rest of the proof is easy to explain if we use the prefix version of Kolmogorov complexity (see~\cite{kolmbook} for its definition and properties). It is close to the standard (plain) Kolmogorov complexity $\KS(\cdot)$ and therefore can be used as a lower bound for $\KS_R$ (with logarithmic precision). On the other hand, the prefix complexity by definition provides a prefix-free code for blocks, so the average prefix complexity (per block) has entropy as a lower bound. It is easy to finish this argument by noting that (a)~the constant in the inequality connecting $\KS_R$ and $\KS$ depends only on $R$, but not on the block size, and (b)~the difference between plain and prefix complexities is $O(\log k)$ for blocks of size~$k$ and does not matter for large $k$. However, we prefer to reformulate this argument to avoid using prefix complexity (see below). This is useful for readers who are not familiar with algorithmic information theory, and also will allow us to give a complexity-free characterization of finite-state dimension.

Now we give the details. 

\emph{First part}. Here we use the same Lemma~\ref{lem:shannon-fano} that was used in the proof of Theorem~\ref{th:compress}. Consider some sequence $\alpha$ whose finite-state dimension (in aligned version) is smaller than some threshold $\tau$. Since $\dimFS^{\mathrm{a}}(\alpha)=\lim_k H_k^{\mathrm{a}}(\alpha)/k$,  for all sufficiently large $k$ we have $H_k^{\mathrm{a}}(\alpha)<k\tau$.  Fix one of these values of~$k$. 

By definition $H_k^{\mathrm{a}}(\alpha)$ is the $\liminf$ of entropies of random aligned $k$-blocks in the growing prefixes of $\alpha$. For every $N$ that is a multiple of $k$ we consider the distribution $Q_N$ on $\mathbb{B}^k$ for the random aligned $k$-block in the $N$-bit prefix. The set of all distributions on $\mathbb{B}^k$ is compact. Therefore we may choose an increasing sequence of lengths $N_0, N_1,\ldots$ (all being multiples of $k$) such that the corresponding distributions $Q_{N_i}$ converge to some distribution $P$ on $\mathbb{B}^k$, and $H(Q_{N_i})<k\tau$ for all $i$. The entropy is a continuous function on the set of all distributions, therefore $H(P)\le k\tau$.

Assume first that all values of~$P$ are positive (no zeros).  Then we apply Lemma~\ref{lem:shannon-fano} to the distribution $P$ and get some automatic description mode $R$ with an upper bound for the $\KS_R$-complexity. We use this upper bound (divided by $N_i$) for prefixes of length $N_i$ and get (dividing by $N_i$) the bound
\[
\KS_R(a_0a_1\ldots a_{N_i-1})/{N_i} \le \frac{1}{k}\left(\sum_B Q_{N_i}(B)\log\frac{1}{P(B)}+1\right).
\]
Since $Q_{N_i}$ converge to $P$, the sum converges to $H(P)\le k\tau$, and we get an upper bound for the $\liminf$:
\[
\liminf_{N\to\infty}{\KS_R(a_0a_1\ldots a_{N-1})}/{N} \le \tau+1/k.
\]
This can be done for all sufficiently large $k$. For each $k$ we get some automatic description mode $R$ depending on $k$. Therefore, the infimum taken over all description modes is at most $\tau$, and this is what we need.

If some values of $P$ are zeros (some blocks have zero probability in the limit distribution), we cannot use the code provided by Lemma~\ref{lem:shannon-fano} since it does not provide codewords for blocks that have zero probability in $P$. As we noted in the proof of Theorem~\ref{th:compress}, one may change the code provided by the Lemma, adding leading $0$ to all codewords, and then use codewords starting from $1$ to encode ``bad'' blocks that have zero probability. Then all blocks have codes of finite length, the constant $1$ in the Lemma is replaced by $2$, and we can proceed as before. The exact lengths of codewords for bad blocks do not matter, since the limit frequencies of bad blocks are zeros.\footnote{The other way to deal with bad blocks is to use some $P'$ that is close enough to $P$ and has no zero probabilities; the overhead depends on the Kullback -- Leibler distance between $P$ and $P'$ and can be made arbitrarily small.}

The first part is proven.

\emph{Second part}. Assume that  $\dimFS(\alpha)=\lim_k (H_k^\mathrm{a}(\alpha)/k)>\tau$ for some $\tau$. Then $H_k^\mathrm{a}(\alpha)>k\tau$ for all sufficiently large $k$.  Fix  some $k$ with this property. Then the $\liminf$ of entropies of the $k$-blocks in growing prefixes exceeds $k\tau$. So, for this $k$ and for all sufficiently long prefixes the entropy of the corresponding distribution on aligned $k$-blocks is greater than~$k\tau$. Fix some automatic description mode~$R$. We will prove that for large enough prefixes $a_0a_1\ldots a_{n-1}$ the $\KS_R$-complexity per bit (i.e., divided by~$n$) is large, namely, exceeds $\tau-O(k/n)-O(\log k/k)$, where hidden constants do not depend on $n$ and $k$. Taking $\liminf$ when $n\to\infty$, we get rid of $O(k/n)$ and concluder that $\liminf_n (\KS_R(a_0\ldots a_{n-1})/n) \ge \tau - O(\log k/k)$. This lower bound works for arbitrarily large~$k$, therefore $\liminf_n (\KS_R(a_0\ldots a_{n-1})/n) \ge \tau$. In this argument $R$ is an arbitrary automatic description mode, so 
\(
\inf_R\liminf_n \KS_R(a_0\ldots a_{n-1})/n \ge \tau,
\)
as required.

To get the lower bound for $\KS_R(a_0a_1\ldots a_{n-1})$, we use that $\KS_R$ is superadditive (see Definition~\ref{def:superadditive}) and satisfies some calibration properties saying that there are not too many strings with small values of $\KS_R$. Namely, there are at most $O(2^m)$ strings $x$ with $\KS_R(x)\le m$, since all of them have descriptions of length at most $m$, there is at most $O(2^m)$ descriptions of this kind, and each of them serves $O(1)$ strings by definition of an automatic description mode. We use a bit weaker calibration condition in the following lemma since it will be useful later.

\begin{lemma}\label{lem:superadditive}
Let $F(x)$ be a superadditive non-negative real-valued function on strings. Assume that $F$ satisfies the following calibration condition:
\[
\sum_{|x|=s} 2^{-F(x)}\le \poly(s)
\]
for every length $s$. Then for every string $x$ that is a concatenation of $k$-bit blocks for some $k$, if $Q$ is the distribution on $\mathbb{B}^k$ that corresponds to the frequencies of these blocks in $x$, we have
\[
F(x) \ge \frac{|x|}{k}\left(H(Q)-O(\log k)\right).
\]
\end{lemma}
In the calibration condition $\poly(s)$ denotes some polynomial in $s$, i.e., we assume the polynomial growth of the sum of $2^{-F(x)}$ taken over all $s$-bit strings $x$, as a function of~$s$.

\begin{proof}
Assume that $x$ consists of $m=|x|/k$ bit blocks of length $k$, so $x=B_0B_1\ldots B_{m-1}$. Then, due to the superadditivity of $F$,
\[
F(x)=F(B_0B_1\ldots B_{m-1})\ge F(B_0)+\ldots+F(B_{m-1}),
\]
and we need to get a lower bound for the sum in the right hand side. For that, note that for an integer-valued function $F'(x)=\lfloor F(x)+c\log |x|\rfloor$ we have
\[
\sum_{|x|=s} 2^{-F'(x)}\le 1,
\]
for all $s$, if $c$ is a large enough constant. Indeed,  the $c\log |x|$ additive term in the exponent compensates for $O(\poly(s))$-factor and rounding. Using this property for $s=k$, we conclude that there exists a prefix code for $k$-bit strings where the codeword for a string $x$ has length $F'(x)$. The average length of this code for $k$-blocks distributed according to $Q$ is at least $H(Q)$, so we have
\[
\frac{F'(B_0)+\ldots+F'(B_{m-1})}{m} \ge H(Q).
\]
Therefore,
\[
F(x)\ge F(B_0)+\!\ldots\!+F(B_{m-1})\ge F'(B_0)+\!\ldots+\!F'(B_{m-1})-mO(\log k) \ge m(H(Q)-O(\log k)),
\]
as claimed.
\end{proof}

We will apply this lemma to $F=\KS_R$, so we need to show that $\KS_R$ satisfies the calibration condition. As we mentioned, there is at most $O(2^m)$ strings $x$ with $\KS_R(x)\le m$, and this is enough:

\begin{lemma}\label{lem:two-calibre}
Assume that $F$ is a non-negative function on strings, and for every integer $m\ge 0$ there is at most $O(2^m)$ strings $x$ such that $F(x)\le m$. Then $F$ satisfies the calibration condition of Lemma~\ref{lem:superadditive}. 
\end{lemma}

\begin{proof}
Consider the sum $\sum_{|x|=s} 2^{-F(x)}$ for some $s$. It contains $2^s$ terms. For some of these terms $F(x)>s$, so each of them is less than $2^{-s}$ and the sum is at most $1$.  All other terms we classify into $s+1$ groups according to the value of $\lfloor F(x)\rfloor$. The group where $\lfloor F(x)\rfloor = i$ contains $O(2^i)$ terms and each is at most $2^{-i}$, so in total we get $O(s)\le \poly(s)$, as required.
\end{proof}

\begin{remark}
Due to $\poly(s)$ factor in the calibration condition,  it remains valid if we subtract $O(\log |x|)$-term from $F(x)$. Therefore, the calibration condition is true for all versions of Kolmogorov complexity: they differ by $O(\log n)$ for strings of length $n$, and for the plain complexity the condition of Lemma~\ref{lem:two-calibre} holds.\footnote{One can note also that the calibration condition is obviously true for prefix complexity, since the sum of $2^{-\KP(x)}$ over all $x$ (of any length) is at most~$1$; the same argument works for monotone complexity, since the set of all strings of a given length is prefix-free.}
\end{remark}

Now we may apply Lemma~\ref{lem:superadditive} to get a lower bound for $\KS_R(a_0a_1\ldots a_{n-1})$ for an arbitrary prefix $a_0\ldots a_{n-1}$ of $\alpha$. Take an arbitrary $k$ and let $m=\lfloor |x|/k\rfloor$. We split the prefix $a_0a_1\ldots a_{n-1}$ into $m$ blocks of length $k$ (deleting the last incomplete block that can only increase $\KS_R$) and use the bound provided by the lemma. Dividing by $n$, we see that 
\[
\frac{\KS_R(a_0a_1\ldots a_{n-1})}{n}\ge \tfrac{m}{n}(H(Q) - O(\log k)) \ge \tfrac{m}{n}(k\tau - O(\log k)) \ge \tau - \frac{O(k)}{n} - \frac{O(\log k)}{k}.
\]
for sufficiently large $n$. Here $Q$ is the distribution on $k$-blocks; its entropy is at least $\tau k$ for large enough $n$. The last step is valid since $n=k(m+O(1))$. So we get the desired inequality, and this finishes the proof of Theorem~\ref{th:fsd-automatic}.
\end{proof}

\begin{remark}\label{rem:alternative}
In fact, our proof of Theorem~\ref{th:fsd-automatic} gives a bit more that we claimed. Namely, we can prove the inequalities between the quantities used in the two definitions in the strongest possible form. We can show that
\begin{itemize}
\item If the aligned block entropy is small for some $k$ and for infinitely many prefixes, then infinitely many prefixes have small automatic complexity. To prove this, we first use the trick used to prove Theorem~\ref{th:compress} and not that the block entropy is also small for $2k$-bit blocks, $4k$-bit blocks etc. Then we use the limit distribution for $k$-bit blocks (or $2k$-bit blocks, or $4k$-bit blocks) to construct a code that provides a good compression ratio. Therefore, $\inf_R\liminf_n (\KS_R(a_0\ldots a_{n-1})/n) \le H_k^\mathrm{a}(\alpha)/k$ for every $k$ (and every sequence $\alpha=a_0a_1\ldots$).
\item If the block entropy is large for some $k$ and for all sufficiently long prefixes, then all long prefixes have large automatic  complexity. For that we split the sequence into $k$-bit blocks, use superadditivity and provide a bound for compression ratio (with error $O(\log k/k)$). So it is enough to have infinitely many $k$ with large $H^\mathrm{a}_k(\alpha)/k$.
\end{itemize}
These two arguments show that
\[ 
\limsup_k (H^\mathrm{a}_k(\alpha)/k) \le
\inf_R \,\liminf_n (\KS_R(a_0\ldots a_{n-1})/n)
\le \inf_k (H^\mathrm{a}_k(\alpha)/k),
\]
so we conclude that 
\[
\lim_k (H^\mathrm{a}_k(\alpha)/k) =
\inf_k (H^\mathrm{a}_k(\alpha)/k)
\]
(and the limit exists) without using Theorem~\ref{th:fsd}. Moreover, we can adapt the proof of Theorem~\ref{th:fsd-automatic} for non-aligned blocks to prove a similar equality for the non-aligned case, thus deriving the full statement of Theorem~\ref{th:fsd} without using information-theoretic arguments like Shearer-type inequality. Let us sketch this argument.

In one direction we assume that the non-aligned block entropy for some block size $k$ is small, and want to show that infinitely many prefixes are compressible enough. Note that the distribution for non-aligned $k$-blocks is the average of $k$ distributions that correspond to aligned blocks in the original sequence $\alpha$, then in $\alpha$ without the first bit, then in $\alpha$ without two first two bits, etc. The average of the entropies of these distributions is smaller than the entropy of the average distribution (convexity of entropy; we discussed it for the case of two distributions), so one of these $k$ sequences has compressible prefixes. The deleted bits then can be added back without changing much the automatic complexity, so the original sequence $\alpha$ is also compressible.

In the other direction things are a bit more complicated. We get a lower bound for the automatic complexity by splitting the sequence into $k$-bit blocks, but this lower bound involves the entropy of the aligned distribution. Of course, we can shift the boundaries modulo $k$, and get another lower bound for the same automatic complexity that involves another aligned distribution (for the sequence without first $0$, $1$,\ldots, $k-1$ bits). Averaging this bounds, we get a boundary that involves the average of the entropies of these $k$ distributions. (We can take maximum instead of the average, but we will not need this.) The problem, however, is that this average may be smaller than the entropy of the non-aligned distribution (that is the average of $k$ aligned distributions). More precisely, this average entropy is the \emph{conditional} entropy of the non-aligned distribution when the condition is the position of the block modulo $k$. It remains to note that the difference between unconditional and conditional entropy is bounded by the entropy of the condition, i.e., $\log k$. Since we study the entropy per bit and divide the entropy by $k$, this difference does not matter ($\log k/k\to 0$).

\end{remark}

\subsection[Machine-independent characterization of normal sequences and finite-state dimension]{Machine-independent characterization of normal sequences and finite-state dimension}\label{subsec:stateless}

In fact we have proven the following characterization of finite-state dimension that does not mention explicitly finite automata or entropies.

\begin{theorem}\label{th:stateless}
Let $\alpha=a_0a_1\ldots$ be an infinite bit sequence. Then the finite-state dimension of $\alpha$ is the infimum
\[
\inf_F \,\liminf_n \frac{F(a_0a_1\ldots a_{n-1})}{n}
\]
taken over all superadditive functions $F$ that are upper bounds of Kolmogorov complexity with logarithmic precision, i.e., $\KS(x)\le F(x)+O(\log |x|)$ for all $x$.
\end{theorem}

\begin{proof}
Recall Theorem~\ref{th:fsd-automatic} and its proof. We need to make two more remarks:

First, we may use $\KS_R$ (or $\KA_R$ defined in the next section) as $F$, and this shows that the $\inf_F$ in question does not exceed the finite-state dimension. 

The inequality in the other direction is already proven since in the proof of Theorem~\ref{th:fsd-automatic} we used only the superadditivity and the calibration property, and have noted that the calibration property is true for every $F$ that is an upper bound for the Kolmogorov complexity with logarithmic precision.
\end{proof}

One can say that this result explains the intuitive meaning of finite-state dimension: it measures the compressibility of prefixes of $\alpha$ if only ``local'' compression/decompression methods are allowed for which any splitting the uncompressed sequence induces a splitting of its compressed version.

\begin{remark}
This theorem is quite robust:
\begin{itemize}
\item We can replace the term $O(\log |x|)$ by $O(1)$, since the function $\KS_R$ used in the proof does not exceed $\KS(x)+O(1)$. 
\item We can also replace the term $O(\log|x|)$ by $o(|x|)$, since it is enough for the proof of the lower bound for $\liminf F(a_0\ldots a_{n-1})/n$.
\end{itemize}
\end{remark}

One can delete all references to Kolmogorov complexity replacing them by the calibration condition.

\begin{theorem}\label{th:stateless1}
Let $\alpha=a_0a_1\ldots$ be an infinite bit sequence. Then the finite-state dimension of $\alpha$ is the infimum
\[
\inf_F \,\liminf_n \frac{F(a_0a_1\ldots a_{n-1})}{n}
\]
taken over all superadditive functions $F$  such that 
\[
\sum_{|x|=k} 2^{-F(x)}=O(\poly(k))
\]
\end{theorem}

\begin{proof}
No new argument is needed, since only this calibration condition was used in the proof, and the function $\KS_R$ satisfies this calibration condition (as we have shown).
\end{proof}

\begin{remark}
We can replace the calibration condition by the other one (that is satisfied by the plain Kolmogorov complexity function): the number of strings $x$ such that $F(x)<m$, is $O(2^m)$. Indeed, the function $\KS_R$ satisfies this condition. On the other hand, we have seen that it implies the condition used in Theorem~\ref{th:stateless} (Lemma~\ref{lem:two-calibre}).
\end{remark}

Later (Section~\ref{subsec:fsm}, remark after Theorem~\ref{th:fsd-apriori}) we will see another calibration condition that can be used in the theorem: $\sum_{x\in P} 2^{-F(x)}\le c$ for some $c$ and for every prefix-free set $P$. It obviously implies the condition used in Theorem~\ref{th:stateless}, so we need only to provide a proof in the other direction, i.e., show a superadditive function that satisfies this condition and can be used in the proof instead of $\KS_R$. This function will be a finite-state version of a priori probability (maximal continuous semimeasure in the Kolmogorov complexity theory, see~\cite{kolmbook}).

The characterization of finite-state dimension in terms of automatic complexity allows us to extend the sufficient condition for normality (Section~\ref{subsec:champernowne}, Theorem~\ref{th:normality-condition}) and get the following lower bound for the finite-state dimension.

\begin{theorem}\label{th:lower-bound-dimension}
Let $x_1,x_2,x_3,\ldots$ be a sequence of non-empty binary strings. Let $L_n$ be a rational number that is the average length of $x_1,\ldots,x_n$, i.e., $L_n= (|x_1|+\ldots+|x_n|)/n$. Let $C_n$ be their average Kolmogorov complexity, i.e., $C_n=(\KS(x_1)+\ldots+\KS(x_n))/n$. Assume that $|x_n|=o(|x_1|+\ldots+|x_{n-1}|)$ and $L_n\to\infty$ as $n\to\infty$. Then the finite-state dimension of the bit sequence $\varkappa=x_1x_2x_3\ldots$ is at least $\liminf_n (C_n/L_n)$.
\end{theorem}

\begin{proof}
Using the characterization of the finite-state dimension in terms of automatic complexity, we need to show that for every automatic description mode $R$ the liminf of $\KS_R(u)/|u|$, where $u$ is a prefix of $\varkappa$, is at least $\liminf_n C_n/L_n$. If $u$ ends on the block boundary, i.e., if $u=x_1\ldots x_n$ for some $n$, then
\[
F(u)=F(x_1\ldots x_n)\ge F(x_1)+\ldots+F(x_n)\ge \KS(x_1)+\ldots+\KS(x_n)-O(n)=nC_n-O(n),
\]
since $\KS(x)\le \KS_R(x)+O(1)$. At the same time $|u|=nL_n$, since $L_n$ is the average length of the first $n$ blocks, so
\[
\KS_R(u)/|u|\ge C_n/L_n - O(1)/L_n.
\]
That gives the desired bound for prefixes that end on the block boundaries.

Now we should consider $u$ that do not end on the block boundary. We can delete the last incomplete block and get a slightly shorter $u'$. For this $u'$ we use the same bound as before, and due to the superadditivity it works as a bound for $u$. However, we have $|u|$ in the denominator, not $|u'|$. This does not change the $\liminf$, since we assume that $|x_n|=o(|x_1|+\ldots+|x_{n-1}|)$, so the length of the incomplete block is negligible compared to the total length of previous complete blocks, and the correction factor converges to~$1$.  Theorem~\ref{th:lower-bound-dimension} is proven.
\end{proof}

\subsection{Finite-state martingales and automatic a priori complexity}\label{subsec:fsm}

The original definition of the finite-state dimension~\cite{dai-fsd} was given in terms of games (or martingales corresponding to games). In this section we review this definition and show that it is equivalent to the definitions given above. This equivalence was proven by Bourke, Hitchcock and Vinodchandran~\cite{bourke}; we provide a simple alternative proof based on a finite-state version of a priori probability. But let us first say a few general words about the game approach to randomness that goes back to Ville (see his book~\cite{ville1939}; see~\cite{bss} for more historic details).


The game approach to randomness is based on the following idea: a bit sequence is not random if we can become infinitely rich playing against this sequence. The game is as follows: before seeing the next bit of the sequence, we have some amount of money $m$ and split it into two parts $m=m_0+m_1$, making two bets (on zero and one). Then the next bit is shown, the wrong bet is lost and the correct bet is doubled.\footnote{One may wish to keep some part of the capital not using it for bets, but the same result can be achieved by betting half of it on zero and half of it on one.} So our capital after seeing the next bit $b$ is $2m_b$. The strategy in such a game is a function saying how we should split our capital after seeing a prefix of the sequence we are playing with. The usual way to describe the strategy is to provide the corresponding martingale, a non-negative real-valued function $m(x)$ that says what is our capital after playing with prefix $x$. It is easy to see that the rules of the game described above mean that
\[
m(x) =\frac{m(x0)+m(x1)}{2} \eqno(*)
\]
for every string $x$. 

\begin{definition}
A \emph{martingale} is a function $m$ with non-negative values defined on all binary strings that satisfies the equality $(*)$ for all $x$. 
\end{definition}

A more general notion of martingale is used in the probability theory, but for our purposes this special case is enough. After playing with prefix $x$, we split the capital $m(x)$ and make bets $m(x0)/2$ and $m(x1)/2$; the correct bet is doubled and our capital becomes $m(x0)$ or $m(x1)$.
 
\begin{definition}
A martingale \emph{wins} on a binary sequence $\alpha$ if it is not bounded on the prefixes of~$\alpha$.
\end{definition}

Martingales are in one-to-one correspondence with measures on the Cantor space. A measure on the Cantor space is determined by its values on \emph{intervals} $[x]$; here $[x]$ is an interval that contains all sequences that have prefix $x$. For a measure $\mu$ we have $\mu([x])=\mu([x0])+\mu([x1])$. If the the measure of the entire space is $1$, it is called a \emph{probability distribution}. The uniform Lebesgue measure $\lambda$ on the Cantor space is defined as $\lambda([x])=2^{-|x|}$ and corresponds to independent fair coin tosses. The following statement follows directly from the definitions:

\begin{proposition}
If $\mu$ is some measure on the Cantor space of bit sequences, then 
\[
m(x)=\frac{\mu([x])}{\lambda([x])}
\]
is a martingale. Every martingale corresponds to some measure in this way. Martingales that equal $1$ on the empty string correspond to probability distributions.
\end{proposition}

The conditional probabilities $\mu([x0])/\mu([x])$ and $\mu([x1])/\mu([x])$ are the fractions of capital that are used to bet on $0$ and $1$ respectively, after seeing the prefix~$x$.


There is a characterization of Martin-L\"of random sequences in terms of martingales~\cite{schnorr1971}: a sequence is Martin-L\"of random if and only if no lower semicomputable martingale wins against this sequence. Lower semicomputability means that there is an algorithm that, given a string $x$, produces an increasing computable sequence of rational numbers that converges to the martingale value $m(x)$. A ``scaled-down'' version of this result~\cite{schnorr-stimm} says that a sequence is normal if and only if no finite-state martingale wins against it.  Informally speaking, finite-state martingales correspond to strategies with finite memory, i.e., the strategies that have finite number of states, each state determines the proportion of bets (a pair of rational numbers whose sum is $1$), and the next state is determined by a previous state and the bit seen). We will reprove this result in Section~\ref{subsec:agafonov} (Theorem~\ref{th:s-and-s}).

Martingales can also be used to define effective Hausdorff dimension and (in the case of finite-state martingales) finite-state dimension~\cite{lutz-dimension,lutz-individual}. The dimension of a bit sequence determines how fast a martingale can grow on the prefixes of this sequence: dimension is the infimum of $s$ such that $m(x)/2^{(1-s)|x|}$ is unbounded for some martingale $m$. The exponential growth of martingales was first considered by Schnorr~\cite[Chapter 17]{schnorr1971}; much later these ideas were rediscovered and developed by defining effective Hausdorff dimension~\cite{lutz-dimension,lutz-individual,mayordomo}.

Technically it is convenient to introduce the notion of \emph{$s$-gale} for $s\in [0,1]$.

\begin{definition}
Let $s\in [0,1]$. An $s$-gale is a function $m(x)$ on binary strings with non-negative real values such that
\[
m(x) =\frac{m(x0)+m(x1)}{2^s} 
\]
\end{definition}

This definition introduces a ``tax'': after each game the capital is multiplied by factor $2^{s-1}$. For $s=1$ we have no tax: $1$-gales are just martingales. For $s=0$ we have tax rate $50\%$ (half of the capital is taken away after each game). In the latter case ($s=0$) we cannot win: even if we guess all the bits correctly and make the corresponding bets, our capital will only remain unchanged.

It is easy to see that we may equivalently define $s$-gales as functions of type
\[
m(x)=\frac{\mu([x])}{2^{-s|x|}}
\]
where $\mu$ is some measure.

The effective Hausdorff dimension of a bit sequence $\alpha=a_0a_1\ldots$ can be defined as the infimum of the values of $s$ such that some lower semicomputable $s$-gale wins on~$\alpha$. One can also equivalently define it in terms of Kolmogorov complexity as 
\[
\liminf \KS(a_0a_1\ldots a_{n-1})/n
\]
(see~\cite{lutz-dimension,lutz-individual,mayordomo,staiger2005} or~\cite[Section~13.3]{downey-hirschfeldt} and \cite[Sections 5.8 and 9.10]{kolmbook} for a survey). This subject has a long history (see the discussion in \cite[footnote on p.~598]{downey-hirschfeldt}); we do not go into details since we are interested only in the parallel theory developed in~\cite{dai-fsd} for the finite-state case. It used finite-state gales to define the finite-state dimension. Then in~\cite{bourke} the equivalence between the definitions of finite-state dimension in terms of gales and entropy rates was proven. We used the reverse order: the definition of finite-state dimension was given in terms of the entropy rates, and now we are going to give a simple proof of its equivalence to martingale definition.


First let us give the exact definitions. Assume that a finite set of nodes (states) is given; one of them is called an initial state. For each node (state) there are two outgoing edges labeled $(0,p_0)$ and $(1,p_1)$, where $p_0$ and $p_1$ are non-negative rational numbers and $p_0+p_1=1$. This labeled graph (together with the initial state) determines a probabilistic process: it starts in the initial state and then changes the state in a natural way: an outgoing edge is selected with probability written on it, i.e., the probability to choose an edge is the second component of a pair on that edge. The corresponding bit (the first component of the pair) is sent to the output. We get a measure on the Cantor space.

\begin{definition}
Measures of this type are called \emph{finite-state measures}. If $\mu$ is a finite-state measure on the Cantor space, the ratio $\mu([x])/2^{-s|x|}$ is called a \emph{finite-state $s$-gale}. A finite-state $1$-gale is called a \emph{finite-state martingale}.
\end{definition}

Schnorr and Stimm~\cite{schnorr-stimm} introduced finite-state martingales under the name (in German) ``Verm\"ogenfunktionen erzeugt von endlichen Automaten''. Dai, Lathrop, Lutz and Mayordomo in their paper~\cite{dai-fsd} use the name ``$1$-account finite-state $s$-gale'' for the notion we consider. It is easy to see that finite-state martingales correspond to the gambling strategies described above (the gambler's decision how to split the capital between two bets is computed by a finite automaton: nodes are states of this automaton, the rational numbers on the outgoing edges determine the bets, and the endpoints of the edges are next states for two possible values of the next bit).

\begin{theorem}\label{th:fsg}\leavevmode

\textup{\textbf{1}}. \textup(Schnorr and Stimm,  \textup{\cite[Satz 4.1]{schnorr-stimm}}\textup) A sequence is normal if and only if no finite-state martingale wins against it.

\textup{\textbf{2}}. \textup(Dai, Lathrop, Lutz, Mayordomo, \textup{\cite{dai-fsd}}\footnote{As we have said, in~\cite{dai-fsd} this property was used as a definition of finite-state dimension, so they formulated the result as the characterization of finite-state dimension in terms of aligned entropy rate.}\textup) The finite-state dimension of $\alpha$ is the infimum of $s\in[0,1]$ such that there exists a finite-state $s$-gale that wins against $\alpha$; if there is no such $s$, the finite-state dimension is~$1$.
\end{theorem}

We start by proving the second statement. It gives immediately one implication in the first statement: if no martingale wins on a sequence, then no $s$-gale (for $s\in[0,1]$) can win on it, the finite-state dimension is $1$ and the sequence is normal. We postpone the proof of the reverse implication in the first part, since it uses some additional technique that goes back to Agafonov~\cite{agafonov}.  See below Theorem~\ref{th:s-and-s} for this implication.


To prove Theorem~\ref{th:fsg}, we follow the proof of Theorem~\ref{th:fsd-automatic} with some changes. Namely, we replace the notion of automatic complexity (as defined in Section~\ref{sec:complexity}) by a similar notion that resembles a priori complexity (logarithm of the maximal continuous semimeasure as defined in the algorithmic information theory, see~\cite{kolmbook} for details). 

\begin{definition}
Let $R$ be a finite-state probabilistic process (i.e., a labeled graph of the described type). Consider state $i$ as its initial state, and let $\rho_{R,i}$ be the corresponding measure on the Cantor space.  Its logarithm can be considered as a complexity measure, and we define
\[
\KA_{R,i} (x) = - \log \rho_{R,i}([x])
\]
for a binary string $x$. For a fixed graph $R$, we consider all its nodes as initial state, take the minimum over all nodes $i$ and let
\[
\KA_R (x) = \min_i \KA_{R,i}(x);
\]
\end{definition}

In other words, we consider the maximal probability over all initial states, and its negative logarithm. This is technically important to get a superadditive function; for similar reasons we did not fix an initial state when defining the automatic complexity.

The following result, analogous to Theorem~\ref{th:fsd-automatic}, is essentially a reformulation of the second part of Theorem~\ref{th:fsg}.

\begin{theorem}\label{th:fsd-apriori}
Finite-state dimension of an arbitrary bit sequence $\alpha=a_0a_1a_2\ldots$ is equal to
\[
\inf_R \,\liminf_{n\to\infty} \frac{\KA_R(a_0a_1\ldots a_{n-1})}{n}
\]
\end{theorem}

Before proving this result, let us show that the second part of Theorem~\ref{th:fsg} follows from it. For that we need to show two things:
\begin{itemize}
\item if $s>\liminf_{n\to\infty} (\KA_R(a_0a_1\ldots a_{n-1})/n)$ for some $R$, then there is an $s$-gale that wins against $\alpha$;
\item if there is an $s$-gale that wins against $\alpha$, then $s \ge\liminf_{n\to\infty} (\KA_R(a_0a_1\ldots a_{n-1})/n)$ for some $R$.
\end{itemize}
In both case we consider a probabilistic process $R$ that corresponds to the $s$-gale.  Note first that the inequality $s>\KA_R(a_0a_1\ldots a_{n-1})/n$ by definition means that 
\[
\rho_i ([a_0a_1\ldots a_{n-1}])> 2^{-sn}
\]
for some initial state $i$. So if this happens for some $R$ and infinitely many $n$ (as it should if $s$ exceeds $\liminf$), then we can select $i$ that appears infinitely often, and for that $i$ the corresponding $s$-gale exceeds $1$ infinitely often. It does not mean winning according to our definition, but we can slightly decrease $s$ first (in such a way that it still exceeds the $\liminf$) and apply the argument to the smaller gale. Then we change $s$ back and convert a sequence that exceeds $1$ infinitely often into an unbounded sequence.

On the other hand, if some $s$-gale wins against $\alpha$, then the corresponding distribution infinitely often exceeds $2^{-sn}$ for $n$-bit prefixes of $\alpha$, and $\KA_R(a_0a_1\ldots a_{n-1})$ is smaller than $sn$ for infinitely many prefixes. This finishes the derivation of the second part of Theorem~\ref{th:fsg} from Theorem~\ref{th:fsd-apriori}. Now let us prove Theorem~\ref{th:fsd-apriori}.

\begin{proof}[Proof of Theorem~\ref{th:fsd-apriori}]
We follow the scheme used for the proof of Theorem~\ref{th:fsd-automatic} with minimal changes. In the first part, we need to prove the version of Lemma~\ref{lem:shannon-fano} where $\KS_R$ is replaced by $\KA_R$. 

\begin{lemma}\label{lem:shannon-fano-2}
Let $k$ be some integer and let $P$ be a distribution on a set $\mathbb{B}^k$ of $k$-bit blocks. Then there exists a probabilistic process~$R$ such that for every string $x$ whose length is a multiple of $k$, we have
\[
\KA_R(x)\le \frac{|x|}{k}\left(\sum_B Q(B) \log\frac{1}{P(B)} + 1\right)
\]
where $Q$ is the distribution on $k$-bit blocks appearing when $x$ is split into blocks of size $k$.
\end{lemma}

\begin{proof}
We may consider the same prefix code from Shannon's theorem, and consider a probabilistic process that tosses a fair coin to choose the next bit of a growing string, and decodes this string with respect to the prefix code chosen (when the codeword's end is reached, we output the encoded string; this is done using several states where the next move is deterministic, i.e., has probability~$1$). The probability to get some string $x$ as the output is at least $2^{-m}$ if $m$ is the length of its description, so we get the same bound as in Lemma~\ref{lem:shannon-fano}.

In fact, a simpler argument that gives a better bound is possible. We do not really need the coding argument and corresponding $+1$ overhead. Instead, we may consider a finite-state probabilistic process that generates probability distribution $P$ on the consecutive blocks (different blocks are independent) and get directly the inequality
\[
\KA_R(x)\le \frac{|x|}{k}\left(\sum_B Q(B) \log\frac{1}{P(B)}\right)
\]
without the term ``+1''. Still one small problem remains: by definition, we want the transitional probabilities in the finite-state process to be rational numbers, so we need to replace $P$ by some its rational approximation, and again an (arbitrariraly small) additive term appears, instead of ``$+1$''. In this way we also make all probabilities in the approximate distribution strictly positive, and this is makes $\KA_R$ finite everywhere.
\end{proof}

The rest of the proof remains the same as for Theorem~\ref{th:compress}.

For the other direction, we apply Lemma~\ref{lem:superadditive} to the function $F(x)=\KA_R(x)$. This gives the desired result immediately, and it remains to show that function $\KA_R(x)$ has the required properties:

\begin{lemma}\label{lem:apriori-properties}
For every $R$ the function $\KA_R(x)$ is a superadditive function that satisfies the calibration condition.
\end{lemma}

\begin{proof}
We need to prove superadditivity: $\KA_R(uv)\ge \KA_R(u)+\KA_R(v)$. To get the required lower bound for $\KA_R(uv)$ we need to prove an upper bound for $\max_i\rho_i(uv)$ where maximum is taken over all nodes of $R$, i.e., to prove the same bound for every~$i$. But $\rho_i(uv)=\rho_i(u)\rho_j(v)$ where $j$ is the state where the process is after generating $u$, and this gives the required bound. Here it is essential that we take maximum of $\rho_i(x)$ over all $i$ (minimum of $\KA_{R,i}(x)$ over all $i$), this is a technical trick that makes $\KA_R$ superadditive.

The calibration condition (Lemma~\ref{lem:superadditive}) is easy to check: 
\[
2^{-\KA_R(x)}=\max_i 2^{-\KA_{R,i}(x)}\le \max_i \rho_{R,i}(x) \le \sum_i \rho_{R,i}(x).
\]
Now we compute the sum over all $x$ of given length $s$, and for each $i$ the sum of $\rho_{R,i}(x)$ over these $x$ is $1$, so the sum in the calibration condition does not exceed the number of states. So we get not only a polynomial in $s$ bound, but a constant bound.
\end{proof}

We could also prove this lemma saying that $\KA_R$ (for every $R$) is an upper bound for apriori Kolmogorov complexity (see~\cite[Section 5.3]{kolmbook}) and all the version of complexity differ only by a logarithmic term and satisfy the calibration condition.

This lemma finishes the proof of Theorem~\ref{th:fsd-apriori}, and therefore the second part of Theorem~\ref{th:fsg} is also proven. We will return to the first part after discussing Agafonov's result (see Theorem~\ref{th:s-and-s} below).
\end{proof}

\subsection{Agafonov's and Schnorr--Stimm's theorems}\label{subsec:agafonov}

In this section we derive another classical result about normal numbers, Agafonov's theorem~\cite{agafonov}, from the martingale characterization. 

Agafonov's result is motivated by the von Mises' approach to randomness (see, e.g.,~\cite[Chapter 9]{kolmbook} for a historic account). As von Mises had mentioned, a random sequence (he used German word \emph{Kollektiv}) should remain random after using a reasonable selection rule. More precisely, assume that there is some set $S$ of binary strings. This set determines a ``selection rule'' that selects a subsequence from every binary sequence $\alpha$. The selection works as follows: we observe a binary sequence $\alpha=a_0a_1a_2\ldots$ and select terms $a_n$ such that $a_0a_1\ldots a_{n-1}\in S$ (without reordering the selected terms). We get a subsequence; if an initial sequence is ``random'' (is plausible as an outcome of a fair coin tossing), said von Mises, this subsequence should also be random in the same sense. The Agafonov's theorem says that for regular (automatic) selection rules and normality as randomness this property is indeed true.

\begin{theorem}[Agafonov]\label{th:agafonov}
Let $\alpha=a_0a_1a_2\ldots$ be a normal sequence. Let $S$ be a regular \textup(=recognizable by a finite automaton\textup) set of binary strings. Consider a subsequence $\sigma$ made of terms $a_n$ such that $a_0a_1\ldots a_{n-1}\in S$, taken in the same order as in the original sequence. Then $\sigma$ is normal or finite.
\end{theorem}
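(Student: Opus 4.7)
The plan is to follow the same template as for Wall's theorem --- transfer compressibility via an automaton-level construction --- but with the complexity notion upgraded to handle the two streams of information that a selection creates. Concretely, I would define an \emph{automatic pair-description mode} as a relation $R\subseteq(\mathbb{B}^*\times\mathbb{B}^*)\times\mathbb{B}^*$ realised by a two-input, one-output transducer that is $O(1)$-valued from the input pair to the output, and set $\KS_R(x)=\min\{|p|+|q|:((p,q),x)\in R\}$. The pair analogues of Proposition~\ref{prop:modes} and of the relevant parts of Theorem~\ref{th:complexity} (upper bound by $|x|$, Kolmogorov bound, superadditivity over concatenation, and composition with automatic relations) go through by the same arguments with an extra input tape. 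The only real change is that the Kolmogorov bound becomes $\KS(x)\le \KS_R(x)+O(\log|x|)$, since a prefix-free length field is now needed to recombine the two input tapes into a single Kolmogorov description. In the block-by-block proof of Theorem~\ref{th:compress} this $O(\log k)$ overhead per $k$-block is absorbed into the $o(k)$ slack, so the characterisation survives: $\alpha$ is normal iff $\liminf \KS_R(\alpha_0\ldots\alpha_{n-1})/n\ge 1$ for every automatic pair-description mode~$R$.

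With this in place I would argue by contraposition. Suppose $\sigma$ is infinite but not normal; by the single-string direction of Theorem~\ref{th:compress} there is an ordinary automatic description mode $R_\sigma$ and some $\eps>0$ with $\KS_{R_\sigma}(\sigma_0\ldots\sigma_{m-1})\le (1-\eps)m$ for infinitely many $m$. I would then build a pair-description mode $R_\alpha$ whose decoder runs, in lockstep, three finite-state machines: the automaton of $R_\sigma$, consuming the first input tape and emitting a stream of $\sigma$-bits; the $S$-automaton, consuming the $\alpha$-bits as they are produced; and a scheduler that at every step either forwards the next $\sigma$-bit (when the $S$-automaton's current state marks the next position as selected) or copies the next symbol of the second input tape, the ``non-selected bits'' tape. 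The product of the three is an automaton in the sense of Definition~\ref{def:automatic}, and it is $O(1)$-valued because the $S$-automaton and the scheduler are deterministic and $R_\sigma$ is $O(1)$-valued. If $\alpha_0\ldots\alpha_{n-1}$ contains exactly the first $m$ selected positions of $\alpha$, the pair (shortest $R_\sigma$-description of $\sigma_0\ldots\sigma_{m-1}$; the $n-m$ non-selected bits of $\alpha_0\ldots\alpha_{n-1}$ written verbatim) witnesses $\KS_{R_\alpha}(\alpha_0\ldots\alpha_{n-1})\le (1-\eps)m+(n-m)=n-\eps m$.

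To conclude a violation of the pair-version characterisation of normality for $\alpha$ I need $m/n$ to stay bounded below by a positive constant along the infinite sequence of good $m$'s, and this is the step I expect to be the main obstacle. It amounts to a density lemma: if $S$ is regular and $\alpha$ is normal, then either $\sigma$ is finite or the selected positions have positive upper density in $\alpha$. I would prove it by inspecting the strongly connected components of the $S$-automaton viewed as the Markov chain driven by fair coin flips: the language of $\alpha$-prefixes that drive the automaton to any fixed state $q$ is regular, hence its density is determined by the block frequencies of $\alpha$, and normality forces it to exist and to equal the stationary weight of $q$; any state visited infinitely often must lie in a recurrent SCC and so has strictly positive visit density. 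If $\sigma$ is infinite then some selecting state is visited infinitely often, giving the required positive density of selections. Plugging this back into the bound $n-\eps m$ contradicts the pair-characterisation of normality applied to $\alpha$, and Theorem~\ref{th:agafonov} follows.
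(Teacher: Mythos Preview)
Your overall architecture matches the paper's: introduce pair description modes with complexity $|p|+|q|$, observe that the Kolmogorov bound acquires an $O(\log)$ term which is harmless in the block argument, build the ``merge'' automaton (the paper factors this as the composition of a splitting relation with $R_\sigma$; you build a single product; this is cosmetic), and reduce everything to a density lemma.

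The density lemma is where your sketch diverges from the paper and where there is a genuine gap. Your inference ``the language of $\alpha$-prefixes that drive the automaton to a fixed state $q$ is regular, hence its density is determined by the block frequencies of $\alpha$'' is not valid: regularity of a set $L$ does not imply that the density of $\{n:\alpha_0\ldots\alpha_{n-1}\in L\}$ is a function of the block statistics of~$\alpha$. For instance, for the parity-of-ones automaton both prefix languages are regular, yet whether each state has visit frequency $1/2$ is not a block-frequency statement about $\alpha$; it is a separate theorem that needs its own proof. So the jump from normality to ``visit frequency equals stationary weight'' is unjustified as written. The paper sidesteps this entirely with a combinatorial argument: it first shows that the set $X$ of infinitely visited states is a sink strongly connected component (via a synchronizing-word construction), and then builds a single finite word $u$ that, started from any state of~$X$, forces the automaton through every state of~$X$. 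Normality is invoked only to say that $u$, being a fixed block, occurs with positive limit frequency---and \emph{that} is a block-frequency statement---so each occurrence contributes at least one selection. This yields $\liminf m/n>0$, which is what you actually need; note that ``positive upper density'' as you stated it would not suffice to control $m/n$ along your particular infinite set of good~$m$'s.
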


\begin{proof}
We already know that a sequence is \emph{not} normal if and only if there is a finite-state $s$-gale for $s<1$ that wins against it. So we need to show that if some $s$-gale wins against a selected subsequence, then there is some other $s'$-gale for (may be, different) $s'<1$ that wins against the entire sequence. In terms of martingales: if some martingale wins exponentially fast playing against the selected (infinite) subsequence, then some other martingale wins exponentially fast against the entire sequence (may be, with different exponent).

In this language the idea of the proof is obvious. Assume that we have some strategy $\sigma$ that plays against the subsequence. Then we can play against the entire sequence as follows. We make the trivial bets ($0.5+0.5$) when we bet on the non-selected bits of the entire sequence, and use $\sigma$ to bet on the selected bits. Since the selection rule is defined by a finite automaton, it is easy to see that the new strategy also has finite memory, and the capital will be the same as in the game of $\sigma$ against the subsequence. The only problem is the rate: if the selected subsequence is very sparse, then exponential rate in the subsequence game is no more an exponential rate in the entire game. But the selected subsequence cannot be too rare, its density is separated from zero, as the following well known lemma\footnote{This lemma was implicitly used in the original proof of Agafonov~\cite{agafonov}; an explicit statement can be found, e.g., in~\cite[Lemma 7.5]{bch2}; note that in~\cite{becher-heiber} a simple special case is considered when the transition function is free of cycles of non-selecting states.} says:

\begin{lemma}\label{lem:density}
If the selected subsequence is infinite, then it has a positive density, i.e., the $\liminf$ of the density of the selected terms is positive.
\end{lemma}

\begin{proof}[Proof of the lemma]

Consider a deterministic finite automaton that recognizes the set $S$. We denote this automaton by the same letter $S$. Let $X$ be the set of states of $S$ that appear infinitely many times when $S$ is applied to $\alpha$. Starting from some moment, the automaton is in $X$, and $X$ is strongly connected (when speaking about strong connectivity, we ignore the labeling of the transition edges). Let us show that \emph{vertices in $X$ have no outgoing edges that leave $X$}. If these edges exist, let us construct a string $u$ that forces $S$ to leave $X$ when started from any vertex of $X$. This will lead to a contradiction: a normal sequence has infinitely many occurrences of $u$, and one of them appears when $S$ already is in $X$. 

How to construct this $u$? Take some $q\in X$ and construct a string $u_1$ that forces $S$ to leave $X$ when started from $q$. Such a string $u_1$ exists, since $X$ is strongly connected, so we can bring $S$ to any vertex and then use the letter that forces $S$ to leave $X$. Now consider some other vertex $q'\in X$. It may happen that $u_1$ already forces $S$ to leave $X$ when started from $q'$. If not and $S$ remains in $X$ (being in some vertex $v$), we can find some string $u_2$ that forces $S$ out of $X$ when started at $v$. Then the string $u_1u_2$ forces $S$ to leave $X$ when started in any of the vertices $q, q'$. Then we consider some other vertex $q''$ and append (if needed) some string $u_3$ that forces $S$ to leave $X$ when started at $q$, $q'$ or $q''$ (in the same way). Doing this for all vertices of $X$, we get the desired $u$ (and the desired contradiction).\footnote{One may use also a probabilistic argument: for every vertex $q\in X$ there is some string that forces $S$ to leave $X$ when started at $q$, so for a sufficiently long random input string the probability to remain all the time in $X$ is very small. And if it is smaller that $1/|X|$, there is an input string that works for all $q\in X$.}

So $X$ has no outgoing edges (and therefore is a strongly connected component of $S$'s graph). Now the same argument shows that there exists a string $u$ that forces $S$ to visit all vertices of $X$ when started from any vertex in $X$. This string $u$ appears with positive density in $\alpha$. So either the selected subsequence is finite (if $X$ has no accepting vertices) or the selected subsequence has positive density (since every occurrence of $u$ means that at least one term is selected when $S$ visits the accepting vertex). Lemma~\ref{lem:density} is proven.
\end{proof}

This finishes the proof of Theorem~\ref{th:agafonov}.
\end{proof}

\begin{remark}
In our exposition Agafonov's theorem is decomposed in three parts: (a)~positive density lemma, (b)~the obvious remark saying that applying a finite-state strategy in the game with a subsequence selected by a finite automata, we implement a finite-state strategy\footnote{This argument can also be adapted to the compression language: we can keep the selected subsequence in the compressed form and keep the rest uncompressed~\cite{becher-heiber,fct2017}. However, this creates some technical problems since in this way the compressed version is a pair of strings, and the argument is more natural with martingales.} ; (c) characterization of normal sequences in terms of martingales. All these parts were present already in~\cite{schnorr-stimm}, but we presented the proof for the reader's convenience and to make the decomposition clear.

The original proof of Agafonov (published in a hard-to-find volume and in Russian) was recently made available in~\cite{seiller-simonsen} (this reference was provided by an anonymous reviewer; thanks!). The proof is rather technical (both in the original version and in the embelisshed account), requiring several pages of estimates and computations after some preparations.
\end{remark}

Now we return to the first claim of Theorem~\ref{th:fsg}. We postponed the proof of the following result: \emph{no finite-state martingale wins against a normal sequence}. Now we are ready to prove it and even a slightly stronger statement~\cite{schnorr-stimm}:

\begin{theorem}[Schnorr and Stimm]\label{th:s-and-s}
Assume that $\alpha$ is a normal sequence and $m$ is a finite-state martingale. Then either the values of $m$ on the prefixes of $\alpha$ are constant, starting from some moment, or they decrease exponentially fast.
\end{theorem}

\begin{proof}
First we use the same argument as in the proof of Lemma~\ref{lem:density}. Look at the states of the finite-state martingale and consider the set $X$ of states that appear infinitely often. As we have seen, this set has no outgoing edges and every state has positive $\liminf$-density. Consider some state $i$ from $X$. It has two outgoing edges. Theorem~\ref{th:agafonov} guarantees that these two edges are used equally often (in the limit), since every subsequence selected by a finite automaton is normal (imagine that this state is used as the unique accepting state in the selection rule). Does the martingale makes a non-trivial bet in the state $i$? If it does not, for all $i$, i.e., if the martingale makes equal bets in all states from $X$, then the capital remains constant since the game stays in $X$ starting from some moment. If there is some $i\in X$ where the bets are not equal (say, $p$ and $1-p$ fractions of capital are used, where $p\ne 1/2$), then using each of the two edges once, we multiply the capital by $4p(1-p)$, and this number is less than $1$. So we get a factor less than $1$ for every $i\in X$ with non-equal bets, and get factor $1$ for all $i\in X$ that have equal bets, so the capital decreases exponentially fast (recall the every state in $X$ is visited with positive density).
\end{proof}

\subsection[Multi-account gales and a more general notion of finite-state measures]{Multi-account gales and a more general notion of finite-state measures}
\label{subsec:multi-account}

As we have said, the finite-state $s$-gales in our sense are called \emph{$1$-account} $s$-gales in~\cite{dai-fsd}. In the same paper a more general notion, called \emph{finite-state $k$-account $s$-gales}, is considered. They can be defined as non-negative linear combinations of $1$-account finite-state $s$-gales. The intuitive motivation is clear: the gambler splits her capital into $k$ different ``accounts'' and for each account uses the finite-state strategy (but never transfers money between the accounts).

Obviously, for the dimension of individual sequences (the case we are considering) this does not change anything: to win against the sequence, a multi-account strategy should contain a winning sub-strategy for some account.

From the viewpoint of gambling strategy the notion of finite-state martingale looks quite reasonable. However, if we consider just output distribution of random processes with finitely many states, there is a natural generalization. Assume that a finite set of states is given, and one of them is chosen as an initial state.  Assume also that for every state there are several outgoing edges, each has some transition probability, and for every state the sum of the probabilities for all outgoing edges equals $1$. This defines a random walk, and if we add for each edge a bit label ($0$ or $1$), we get a probability distribution on the infinite bit sequences. (The difference with the previous definition is that now the state is not determined by the output string.) The output distribution of this type can then be used to define martingales and $s$-gales in the same way, giving a more general definition.

It is easy to see that $k$-account gales become a special case of this definition (the splitting of the money between accounts is replaced by a probabilistic choice on the first step). Also one can define the version of $\KA_R$ based on this more general definition (taking the maximum over all states as initial states), and this version can be also used to characterize the finite-state dimension, as the following lemma implies.

\begin{lemma}\label{lem:general-ka}
The function $\KA_R$ in this general version is also a superadditive upper bound for a priori complexity. 
\end{lemma}

\begin{proof}
Indeed, the distribution obtained for every fixed initial state is a computable measure on the Cantor space, so for each initial state we get an upper bound for a priori complexity, and the minimum of these bounds is still an upper bound.

To show the superadditivity, we cannot anymore use the equality $\rho_i(uv)=\rho_i(u)\rho_j(v)$ since now the process can be in different states after output $u$. We need to replace $\rho_j(v)$ by a weighted sum of $\rho_k(v)$ for different $k$ that can be the states after output $u$. But since we take the maximum of $\rho_k(v)$ for all $k$ when defining $\KA_R$, we still have the superadditivity property.
\end{proof}

\nb{Probably there are some standard names? Markov chains?}

\subsection{Strong dimensions}\label{subsec:strong}

The notions of Hausdorff dimension and effective Hausdorff dimension have ``strong counterparts'': the notion of \emph{packing dimension} (called also modified upper box dimension) and \emph{constructive strong dimension}~\cite{athreya-strong}. Theorem 4.4 in~\cite{athreya-strong} says that the packing dimension of a set $X$ of binary sequences  equals the infimum of all $s$ such that there exist an $s$-gale that strongly wins (converges to infinity) on all elements of $X$. In the same paper~\cite{athreya-strong} it is shown that the strong dimension of a singleton $\{a_0a_1\ldots\}$ equals $\limsup_n \KS(a_0\ldots a_{n-1})/n$ (Corollary 6.2). This parallelism extends to the finite-state case: there is a natural notion of a \emph{strong finite-state dimension} of a sequence $\alpha=a_0a_1a_2\ldots$. Like the notion of finite-state dimension, it can be defined in many equivalent ways.

\begin{theorem}\label{th:strong}
The following definitions of the strong finite-state dimension are equivalent:

\begin{itemize}
\item the infimum of all $s$ such that there exists a finite-state $s$-gale that \emph{strongly} wins against $\alpha$, i.e., the capital converges to infinity when playing against $\alpha$.

\item the same, for $k$-account $s$-gales instead of $1$-account finite state $s$-gales.

\item for a given automatic relation $R$ consider $\limsup_n \KS_R(a_0\ldots a_{n-1})/n$; take the infimum over all $R$.

\item for a given finite-state probabilistic process $R$ consider $\limsup_n \KA_R(a_0\ldots a_{n-1})/n$; take the infimum over all $R$.

\item for a given superadditive function $F$ on strings that satisfies the calibration condition of Theorem~\ref{th:stateless1} consider $\limsup_n F(a_0\ldots a_{n-1})/n$; take infimum over all $F$.

\item for a given $k$ split $\alpha$ into disjoint $k$-bit blocks; then, for a given $N$, consider the distribution on the first $N$ of these $k$-bit blocks and its entropy $H^\mathrm{a}_{k,N}(\alpha)$. Let $\bar{H}^\mathrm{a}_k (\alpha)=\limsup_{N\to\infty} H^\mathrm{a}_{k,N}(\alpha)$; consider $\lim_k \bar{H}^\mathrm{a}_k (\alpha)/k$ or $\inf_k \bar{H}^\mathrm{a}_k (\alpha)/k$.

\end{itemize}

\end{theorem}

In this theorem some variations are allowed: (1)~one may consider more general $s$-gales based on the output distribution of finite-state probabilistic processes (see the discussion in Section~\ref{subsec:multi-account}); (2)~one may use the condition from Theorem~\ref{th:stateless} as a calibration condition; (3)~instead of aligned blocks, we may consider non-aligned ones.

\begin{proof}[Proof sketch]
All these equivalence statement are parallel to corresponding results about finite-state dimension. The difference is that we now consider strongly winning gales (not only unbounded but converging to infinity) and $\limsup$ instead of $\liminf$.

The equivalence proof go in the same way. The only new element is that now, knowing that $\limsup$ of block entropies is small, we cannot choose one limit distribution and adapt the coding (compression) to this specific distribution. Instead, we need a prefix code that works efficiently for all strings of some length (not losing much to a code that is adapted to a given distribution). Considering $k$-blocks as letters of some large alphabet $B$, we may use the following lemma.

\begin{lemma}[Universal coding lemma]
Let $B$ be some alphabet. For a given $N$ there exists a prefix-free code for $N$-bit strings over $B$ such that every string $z$ of length $N$ has a code of length $NH(z)+O(\log N)$, where $H(z)$ is the entropy of the distribution of letters in $z$, and the constant in $O(\cdot)$-notation depends on $B$ but not on $N$.
\end{lemma}

To prove the lemma, we can encode $z$ by first specifying the length of $z$ and number of occurrences of each letter in $z$. Then we may use Shannon--Fano code based on the letters' frequencies to specify $z$, because at that point decoder already knows the frequencies. In the language of Kolmogorov complexity, this lemma is a well-known bound $\KP(z)\le NH(z)+O(\log N)$ that was mentions already in Kolmogorov's papers from 1960s. This lemma also has an measure version saying that there exists a measure $\mu$ on all strings of length $N$ over $B$ such that $\mu(z)$ for every $z$ is at least $P_z(x)/\poly(N)$, where $P(z)$ is the Bernoulli distribution on $B$-strings based on the frequencies of letters in $z$. In the language of Kolmogorov complexity, we may use $\KA(z)$ instead of $\KP(z)$.
\end{proof}

Let us comment on the history of the notion of strong finite-state dimension of a bit sequence. Originally it appeared in the paper of Lempel and Ziv~\cite{ziv-lempel-variable-rate} where it was defined in terms of variable-rate block encoders with finite memory. This is close to the automatic complexity definition; however, definition in~\cite{ziv-lempel-variable-rate} did not introduce the notion of automatic complexity and used instead finite-state compressors such that decompression is unique if the initial and finite states are given in addition to the compressed output. Theorem~3 from their paper~\cite[p.~534]{ziv-lempel-variable-rate} says that this notion, denoted there by $\rho(\cdot)$, can be equivalently defined in terms of $\limsup$ of entropies of \emph{unaligned} blocks. Much later, in 2002 (the year when the arxiv version was published, the journal version appeared in 2007) Athreya, Hitchcock, Lutz and Mayordomo noted~\cite[Theorem 6.18]{athreya-strong} that the notion introduced by Ziv and Lempel (Athreya et al. refer to Ziv's paper~\cite{ziv}, not to~\cite{ziv-lempel-variable-rate}, but this is most probably a typo) coincides with the finite-state strong dimension defined in terms of gales. They also note that one can use both $1$-account and $k$-account gales in this equivalence proof (though they do not consider more general notion of output distribution of a finite-state probabilistic process). In a later paper (2005) Bourke, Hitchcock and Vinodchandran~\cite[Theorem 5.3]{bourke}  cite the result of Ziv and Lempel but strangely use aligned blocks (without explaining why aligned and non-aligned blocks give the same notion of strong finite-state dimension).  Theorem~\ref{th:strong}  includes all these results and shows that they may be obtained almost for free by using  the notions of automatic complexity, finite-state a priori probability and their superadditivity, while the original proofs were rather technical and were scattered among several papers.

\section{Discussion}\label{sec:discussion}

The connection between normality and finite-state computations was noticed long ago, as the title of~\cite{agafonov} shows; see also~\cite{schnorr-stimm} where normality was related to martingales arising from finite automata. This connection led to a characterization of normality as incompressibility (see~\cite{becher-heiber} for a direct proof). On the other hand, it was also clear that the notion of Kolmogorov complexity is not directly practical since it considers arbitrary algorithms as decompressors, and this makes it non-computable. So restricted classes of decompressors are of interest, and finite-state computations are a natural candidate for such a class.

Shallit and Wang~\cite{shallit-wang} suggested to consider, for a given string $x$, the minimal number of states in an automaton that accepts $x$ but not other strings of the same length. Later Hyde and Kjos-Hanssen~\cite{hkh} considered a similar notion using nondeterministic automata. The intrinsic problem of this approach is that it is not naturally ``calibrated'' in the following sense: measuring the information in bits, we would like to have about $2^n$ objects of complexity at most~$n$.  A ``calibrated'' approach was suggested by Calude, Salomaa and Roblot~\cite{csr}, see also~\cite{staiger-talk}; we have already discussed their definition in Section~\ref{subsec:csr}.

The incompressibility notion used in~\cite{becher-heiber} provides such a characterization for yet another approach to automatic complexity. It uses deterministic transducers applied to a sequence whose complexity is measured: transducers are used as compressors, not decompressors.  Still the decompression should be possible: Becher and Heiber require additionally that for every output string $y$ and every final state $s$ there is at most one input string that produces $y$ and brings the automaton into the state $s$. In~\cite{bch2} (see also~\cite{carton-talk}) Becher, Carton and Heiber consider a weaker condition: each output string has $O(1)$-preimages. The difference with our approach is that we do not consider the compression step at all, consider non-deterministic automata without initial/final states and require that decompressor is an $O(1)$-valued function. The proofs become simpler for two reasons: (1)~we compare the automatic complexity and Kolmogorov complexity and use standard results about Kolmogorov complexity; (2)~we explicitly state and prove the superadditivity property $\KS_R(xy)\ge \KS_R(x)+\KS_R(y)$ that is crucial for the proofs.

The reviewers of the previous version of this paper pointed out that Doty and Moser~\cite{doty-moser} were the first who characterized the finite-state dimension of a sequence in terms of automatic complexity based on decompressors. We overlooked this paper (from 2006) and apologize to its authors for not mentioning it  in~\cite{fct2017,fct2019}. They consider the finite-state transducers that have finite number of states, an initial state, transition and output functions. The transition function says what is the next state if the current state and input letter are known; the output function specifies the output string for each state and input letter. A transducer of this type determines a mapping from input strings to output strings. We assume that both input and output are binary strings.

Fix some way to measure the size of the tranducer. For example, we may define the size as the maximum of the number of states and the length of the strings that are values of the output function. For a given $k$, define the automatic complexity $\KS_k(x)$ of a string $x$ as the minimal length of an input string that is mapped to $x$ by some transducer of size at most $k$. Then the finite-state dimension of an infinite sequence $\alpha=a_0a_1\ldots$ is equal to $\inf_k \liminf_n \KS_k(a_0\ldots a_{n-1})/n$. Note that $\KS_k$ decreases as $k$ increases so $\inf_k$ can be replaced by $\lim_k$. This was proven by Doty and Moser~\cite{doty-moser} as well as the similar result for strong dimension and $\limsup$.  They used the previous results characterizing the finite state dimension in terms of decompressors~\cite{athreya-strong} and a rather complicated combinatorial construction converting an arbitrary transducer to a compressor with unique decompression. This needed this construction since the function $\KS_k$ is not superadditive. However, this construction can be easily avoided if we use our tools and note that for every transducer its input-output relation is contained in some automatic description mode and this can be extended to a finite set of transducers by Proposition~\ref{prop:modes}, (a). For the other direction, we need another obvious remark: a prefix-free decoding is performed by a transducer. However, this paper relates this notion to the measure of complexity introduced by Lempe

Another (and earlier) paper pointed out by reviewers is~\cite{sheinwald} where the complexity measure using decompressors is considered. It is shown there that the requirement that the compression is performed by an automaton is redundant in some sense: it is enough to require that decompression is performed by an automaton. This paper does this by relating the complexity measure defined in terms of decompressors to the complexity measure introduced in~\cite{lempel-ziv-complexity}; this makes the entire argument quite complicated.

An interesting open question is to find out the relations between different automatic complexity notions. Is there any formal relation between automatic complexity as defined in Section~\ref{sec:complexity} and notions of finite-state a priori complexity as defined in Section~\ref{subsec:fsm} and~\ref{subsec:multi-account}? Does the generalization of the class of probabilistic processes (Section~\ref{subsec:multi-account}) change the class of the corresponding complexity functions? Note that all three notions (automatic complexity and two finite-state a priori complexities) can be used to characterize normality since they are all superadditive and are upper bounds for Kolmogorov complexity with logarithmic precision. Still the results showing the different definitions of Kolmogorov complexity (using a priori probability and description modes) are close to each other, do not imply that the finite-state versions of the same notions are also close to each other.

\nb{May be, some connection can be established through the comparison with entropies? It would be probably rather weak, but may be still something reasonable\ldots}

\subsection*{Acknowledgments} We are grateful to Veronica Becher, Olivier  Carton and Pablo Heiber for many discussions of their paper~\cite{bch} and the relations between incompressibility and normality, and for the permission to use the observations made during these discussions in the current paper.

The initial version of this paper (that does not deal with finite-state dimension and does not contain the complexity criterion for normality) was presented at Fundamentals of Computation Theory  symposium in 2017~\cite{fct2017} and is available in \texttt{arxiv} as~\cite{arxiv2017}. The results about finite-state dimension and the complexity criterion were presented at Fundamentals of Computation Theory symposium in 2019~\cite{fct2019}.

We are grateful to the colleagues in LIRMM (ESCAPE team) and Moscow (Kolmogorov seminar, Computer Science Department of the HSE), and to the anonymous reviewers of STACS and FCT conferences and of JCSS who provided many important comments and pointed out several references that we overlooked. Many of their remarks are taken into account in the current version.


\end{document}